\newtheorem{thm}{Theorem}
\newtheorem*{thm*}{Theorem}
\newtheorem*{prop*}{Proposition}
\newtheorem{lem}[thm]{Lemma}
\theoremstyle{definition}
\newcommand{\avg}[1]{\left\langle #1 \right\rangle}
\newcommand{\CC}{\mathbb{C}}
\newcommand{\RR}{\mathbb{R}}
\newcommand{\NN}{\mathbb{N}}
\newcommand{\ket}[1]{\left| #1 \right>} % Dirac bras
\newcommand{\bra}[1]{\left< #1 \right|} % Dirac kets
\newcommand{\braket}[2]{\left< #1 \middle| #2 \right>} % Dirac brackets
\renewcommand{\t}[1]{\ensuremath{^{\otimes #1}}}
\newcommand{\GHZ}{\mathrm{GHZ}}
\newcommand{\cA}{\mathcal{A}}
\newcommand{\cE}{\mathcal{E}}
\newcommand{\cF}{\mathcal{F}}
\newcommand{\cG}{\mathcal{G}}
\newcommand{\cL}{\mathcal{L}}
\newcommand{\cM}{\mathcal{M}}
\newcommand{\cS}{\mathcal{S}}
\newcommand{\cT}{\mathcal{T}}
\newcommand{\cU}{\mathcal{U}}
\DeclareMathOperator{\Holant}{Holant}
\newcommand{\Hol}{\textsc{Holant}}
\newcommand{\Holp}[2][]{\textsc{Holant}^{ #1 }\left( #2 \right)}
\newcommand{\csp}{\#\textsc{CSP}}
\newcommand{\FP}{\textsf{FP}}
\newcommand{\sP}{\#\textsf{P}}
\newcommand{\etal}[0]{\emph{et al.}}
\title{A complete dichotomy for complex-valued Holant\textsuperscript{c}}
\author{Miriam Backens
\institute{School of Mathematics, University of Bristol, UK}
\email{m.backens@bristol.ac.uk}
}
\begin{document}

\maketitle

\begin{abstract}
 Holant problems are a family of counting problems on graphs, parametrised by sets of complex-valued functions of Boolean inputs.
 \Hol$^c$ denotes a subfamily of those problems, where any function set considered must contain the two unary functions pinning inputs to values 0 or 1.
 The complexity classification of Holant problems usually takes the form of dichotomy theorems, showing that for any set of functions in the family, the problem is either \sP-hard or it can be solved in polynomial time.
 Previous such results include a dichotomy for real-valued \Hol$^c$ and one for \Hol$^c$ with complex symmetric functions.
 
 Here, we derive a dichotomy theorem for \Hol$^c$ with complex-valued, not necessarily symmetric functions.
 The tractable cases are the complex-valued generalisations of the tractable cases of the real-valued \Hol$^c$ dichotomy.
 The proof uses results from quantum information theory, particularly about entanglement.
\end{abstract}

\section{Introduction}

Holant problems are a framework for the analysis of counting complexity problems defined on graphs.
They encompass and generalise other counting complexity frameworks like counting constraint satisfaction problems (\csp) \cite{cai_holant_2012,cai_complexity_2014} and counting graph homomorphisms \cite{cai_holant_2009,cai_graph_2010}.

A Holant instance is defined by assigning a function from a specified set to each vertex of a graph, with the edges incident on that vertex corresponding to inputs of the function.
The counting problem is a sum-of-products computation: multiplying all the function values and then summing over the different assignments of input values to the edges \cite{cai_holant_2009}.
Here, we consider complex-valued functions of Boolean inputs.
Throughout, all numbers are assumed to be algebraic.

Problems that can be expressed in the Holant framework include counting matchings or counting perfect matchings, counting vertex covers \cite{cai_holant_2009}, or counting Eulerian orientations \cite{huang_dichotomy_2016}.
A Holant problem can also be thought of as the problem of contracting a tensor network; from that perspective, each function corresponds to a tensor with one index for each input.

The main goal in the analysis of Holant problems is the derivation of dichotomy theorems, showing that all problems in a certain family are either polynomial time solvable or \sP-hard.
Families of Holant problems are often defined by assuming that the function sets contain specific functions, which are said to be `freely available'.
As an example, the problem $\csp(\cF)$ for a function set $\cF$ effectively corresponds to the Holant problem $\Holp{\cF\cup\{=_n\mid n\in\NN^*\}}$, where $(=_1):\{0,1\}\to\CC$ is the function that is 1 on both inputs, and, for $n\geq 2$, $(=_n):\{0,1\}^n\to\CC$ is the function satisfying:
\[
 (=_n)(x_1,x_2,\ldots,x_n) = \begin{cases} 1 &\text{if } x_1=x_2=\ldots=x_n \\ 0 & \text{otherwise.} \end{cases}
\]
The problem $\Holp[c]{\cF}$ is the Holant problem where the unary functions pinning edges to values 0 or 1 are available in addition to the elements of $\cF$:
\[
 \Holp[c]{\cF} = \Holp{\cF\cup\{\delta_0,\delta_1\}},
\]
with $\delta_0(0)=1$, $\delta_0(1)=0$ and conversely for $\delta_1$ \cite{cai_holant_2009}.
Another important family is \Hol$^*$, in which all unary functions are freely available \cite{cai_dichotomy_2011}.

Known Holant dichotomies include a full dichotomy for \Hol$^*$ \cite{cai_dichotomy_2011}, a dichotomy for \Hol$^c$ with symmetric functions, i.e.\ where all functions in the sets considered depend only on the Hamming weight of the input \cite{cai_holant_2012}, and a dichotomy for real-valued \Hol$^c$, where functions need not be symmetrical but must take values in $\RR$ instead of $\CC$ \cite{cai_dichotomy_2017}.
Both existing results about \Hol$^c$ are proved via dichotomies for \csp{} problems with complex-valued, not necessarily symmetric functions: in the first case, a dichotomy for general \csp{} problems, and in the second case, a dichotomy for \csp$_2^c$, a subfamily of \csp{} in which each variable must appear an even number of times and variables can be pinned to 0 or 1.

While many dichotomies have been derived for functions taking values in some smaller set, we consider complex-valued functions to be the natural setting for Holant problems.
This is motivated in part by connecting Holant problems to quantum computation, where complex numbers naturally arise: the problem of classically simulating a quantum circuit with fixed input and output states can immediately be expressed as a Holant problem.
The second justification for considering complex numbers is that many tractable sets find a more natural expression over $\CC$.
An example of this are the `affine functions' (see Section \ref{s:preliminary}): they were originally discovered as several distinct tractable sets for a smaller codomain, but their definition is vastly more straightforward when expressed in terms of complex values \cite{cai_complexity_2014}.

We therefore build on the existing work to derive a \Hol$^c$ dichotomy for complex-valued, not necessarily symmetric functions.
In the process, we employ notation and results from quantum information theory.
This approach was first used in a recent paper \cite{backens_new_2017} to derive a dichotomy for \Hol$^+$, in which four unary functions are freely available, including the ones available in \Hol$^c$:
\[
 \Holp[+]{\cF} = \Holp{\cF\cup\{\delta_0,\delta_1,\delta_+, \delta_-\}},
\]
where $\delta_+(x)=1$ for both inputs (i.e.\ it is the same as the unary equality function) and $\delta_-(x)=(-1)^x$.

A core part of quantum theory, and also of the quantum approach to Holant problems, is the notion of entanglement.
A pure\footnote{There are also \emph{mixed} quantum states, which have a different mathematical representation, and which are not considered here.} quantum state of $n$ qubits, the quantum equivalents of bits, is represented by a vector in the space $(\CC^2)\t{n}$, which consists of $n$ tensor copies of $\CC^2$.
Such a vector is called \emph{entangled} if it cannot be written as a tensor product of vectors from each copy of $\CC^2$.

An $n$-ary function $f:\{0,1\}^n\to\CC$ can be considered as a vector in $\CC^{2^n}$ by treating each input as an element of an orthonormal basis for that space and using the function values as coefficients in a linear combination of those basis vectors (cf.\ Section \ref{s:Holant_problems}).
This vector space $\CC^{2^n}$ is isomorphic to $(\CC^2)\t{n}$, allowing functions to be brought into correspondence with quantum states.
We thus call a function entangled if the associated vector is entangled.
Identifying this property in Holant problems lets us apply some of the large body of work on quantum entanglement to Holant problems \cite{popescu_generic_1992,gachechiladze_completing_2017,dur_three_2000,li_simple_2006}.
The resulting complexity classification of Holant problems remains non-quantum, we simply employ a different set of mathematical tools to analyse them.

In the \Hol$^+$ dichotomy, it was shown how to construct a gadget for a ternary entangled function, given an $n$-ary entangled function with $n\geq 3$ and using the freely-available unary functions.
Furthermore, in most cases it was shown to be possible to realise a ternary symmetric function from this \cite{backens_new_2017}.
We show how to adapt those constructions to the \Hol$^c$ framework, where only two unary functions are freely available.
This does not always work, yet all cases in which the construction fails turn out to be ones that were identified in the dichotomy for real-valued \Hol$^c$: in those cases, either the problem is tractable by the \Hol$^*$ dichotomy or it can be reduced to \csp$_2^c$.
With these adaptations, we extend the dichotomy theorem for real-valued \Hol$^c$ to arbitrary complex-valued functions.

In the following, Section \ref{s:Holant_problems} contains the formal definition of Holant problems and an overview over common strategies used in classifying their complexity.
We recap existing results in Section \ref{s:existing}.
The new dichotomy and its constituent lemmas are proved in Section \ref{s:dichotomy}.
Section \ref{s:conclusions} contains the conclusions and outlook.

\section{Holant problems}
\label{s:Holant_problems}

Holant problems are a framework for counting complexity problems defined on graphs, first introduced in \cite{cai_holant_2009}.
Let $G=(V,E)$ be a graph with vertices $V$ and edges $E$ and let $\cF$ be a set of complex-valued functions of Boolean inputs.
Throughout, when we refer to complex numbers we mean algebraic complex numbers.
Let $\pi$ be a function that assigns to each degree-$n$ vertex $v$ in the graph an $n$-ary function $f_v\in\cF$ and also assigns one edge incident on the vertex to each input of the function.
This determines a complex value associated with the tuple $(\cF, G, \pi)$, called the \emph{Holant} and defined as follows:
\[
 \Holant_{(\cF, G, \pi)} = \sum_{\sigma:E\to\{0,1\}} \prod_{v\in V} f\left( \sigma |_{E(v)} \right).
\]
Here, $\sigma$ is an assignment of a Boolean value to each edge in the graph and $\sigma|_{E(v)}$ is the restriction of $\sigma$ to the edges incident on vertex $v$.
The tuple $(\cF, G, \pi)$ is called a \emph{signature grid}.

The associated counting problem is $\Holp{\cF}$: given a signature grid $(\cF, G, \pi)$ for the fixed set of functions $\cF$, find $\Holant_{(\cF, G, \pi)}$.

It is often useful to think of the functions, also called signatures, as vectors or tensors \cite{cai_valiants_2006}.
The $n$-ary functions can be put in one-to-one correspondence with vectors in $\CC^{2^n}$ as follows: pick an orthonormal basis for $\CC^{2^n}$ and label its elements $\{ \ket{x} \}_{x\in\{0,1\}^n}$, i.e. each basis vector is labelled by one of the $2^n$ $n$-bit strings.\footnote{The $\ket{\cdot}$ notation for vectors is the Dirac or bra-ket notation commonly used in quantum theory.}
Then assign to each $f:\{0,1\}^n\to\CC$ the vector $\ket{f} = \sum_{x\in\{0,1\}^n} f(x)\ket{x}$.
Conversely, any vector $\ket{\psi}\in\CC^{2^n}$ corresponds to an $n$-ary function $\psi:\{0,1\}^n\to\CC:: x\mapsto \braket{x}{\psi}$, where $\braket{\cdot}{\cdot}$ denotes the inner product of two vectors.\footnote{Strictly speaking, this notation refers to the complex inner product, i.e.\ $\bra{x}$ is the conjugate transpose of $\ket{x}$, but the distinction is irrelevant if all coefficients of $\ket{x}$ are real.}
The product of two functions of disjoint sets of variables corresponds to the tensor product of the associated vectors.
Where no confusion arises, we drop the tensor product symbol and sometimes even combine labels into a single `ket' $\ket{\cdot}$: for example, instead of writing $\ket{0}\otimes\ket{0}$, we may write $\ket{0}\ket{0}$ or $\ket{00}$.
If $g$ is a unary signature, we sometimes write $\bra{g}_l\ket{f}$ to indicate that the $l$-th input of $f$ is connected to a vertex with signature $g$.

The vector perspective is particularly useful for bipartite Holant problems.
Those arise on bipartite graphs if we choose to assign functions from two different signature sets to the vertices in the two different partitions.
Then the Holant becomes the inner product between two vectors corresponding to the two partitions.
Formally: let $G=(V,W,E)$ be a bipartite graph with vertex partitions $V$ and $W$, and let $\cF, \cG$ be two sets of signatures.
Suppose $\pi$ is a function that assigns elements of $\cF$ to vertices from $V$ and elements of $\cG$ to vertices from $W$ and otherwise acts as described above.
Then:
\[
 \Holant_{(\cF\mid\cG, G, \pi)} = \left( \bigotimes_{v\in V} (\ket{f_v})^T \right) \left( \bigotimes_{w\in W} \ket{g_w} \right), 
\]
where we assume the two tensor products are arranged so that the appropriate components of the two vectors meet.
A bipartite Holant problem over signature sets $\cF$ and $\cG$ is denoted $\Holp{\cF\mid\cG}$.

Any Holant instance can be made bipartite without changing the value of the Holant by inserting an additional vertex in the middle of each edge and assigning it the binary equality signature $=_2$.
Thus:
\[
 \Holp{\cF} \equiv_T \Holp{\cF\mid\{=_2\}},
\]
i.e.\ the two problems have the same complexity.
Formally, we write $A\leq_T B$ if there exists a polynomial-time reduction from problem $B$ to problem $A$.
We write $A\equiv_T B$ if $(A\leq_T B) \wedge (B\leq_T A)$.

In the following, we use the function and vector perspectives on signatures interchangeably.

\subsection{Complexity classification}

Most complexity results about the Holant problem take the form of \emph{dichotomies}, showing that for all signature sets in a specific family, the problem is either \sP-hard or in \FP.
Such a dichotomy is not expected to be true for all counting complexity problems, i.e.\ it is assumed that there are problems in $\sP\setminus\FP$ which are not \sP-hard \cite{cai_dichotomy_2011}.

A number of polynomial-time reduction techniques are commonly used in Holant problems.

The technique of \emph{holographic reductions} is the origin of the name Holant.
Let $M$ be a 2 by 2 invertible complex matrix and define $M\circ f = M\t{\operatorname{arity}(f)}\ket{f}$, where $M\t{1}=M$ and $M\t{n+1}=M\otimes M\t{n}$.
Furthermore, let $M\circ\cF = \{M\circ f \mid f\in\cF \}$; this is called a \emph{holographic transformation}.
Consider a bipartite graph and signature sets $\cF$ and $\cG$.
These satisfy:
\[
 \Holp{\cF\mid\cG} \equiv_T \Holp{M\circ\cF \mid (M^{-1})^T\circ\cG }.
\]
This is because the two transformations cancel out in the inner product: let $(\cF\mid\cG, G, \pi)$ be the set-up for a Holant instance and let $(M\circ\cF\mid (M^{-1})^T\circ\cG, G, \pi')$ be its analogue where $\pi'$ assigns $M\circ f$ whenever $\pi$ assigns $f$ on vertices of the left partition, and $(M^{-1})^T\circ g$ instead of $g$ on the right.
Then in fact $\Holant_{(\cF\mid\cG, G, \pi)} = \Holant_{(M\circ\cF\mid (M^{-1})^T\circ\cG, G, \pi')}$;
this is Valiant's Holant Theorem \cite{valiant_holographic_2008}.

A second technique is that of \emph{gadgets}.
Consider a subgraph of the original graph, which is connected to the larger graph by $n$ edges.
This subgraph can be replaced by a single vertex with an appropriate signature without changing the value of the overall Holant.
Conversely, consider some set of signatures $\cF$ and suppose $g$ is an $n$-ary signature with $g\notin\cF$.
If there exists some subgraph with signatures taken from $\cF$ such that the effective signature for that subgraph is $g$, then \cite{cai_dichotomy_2011}:
\[
 \Holp{\cF\cup\{g\}} \leq_T \Holp{\cF}.
\]
This is because any occurrence of a vertex with signature $g$ can be replaced by the subgraph without changing the value of the overall Holant.
The graph increases in size by an amount linear in the number of occurrences of $g$, as the subgraph contains a constant number of vertices, and the resulting signature grid contains only signatures from $\cF$.
Thus if $\Holp{\cF}$ can be solved in polynomial time, so can $\Holp{\cF\cup\{g\}}$.
We say $g$ is \emph{realisable} over $\cF$.
As multiplying a signature by a non-zero constant does not change the complexity of a Holant problem, we also consider $g$ realisable if we can construct a gadget with effective signature $cg$ for some $c\in\CC\setminus\{0\}$.

In bipartite signature grids, we may distinguish between left-hand side (LHS) gadgets and right-hand side (RHS) gadgets, which can be used as if they are signatures for the left and right partitions, respectively.

Finally, there is the technique of polynomial interpolation \cite{cai_holant_2009}.
This works for signatures of any arity, though for simplicity we will explain the process as applied to a unary signature.
Let $\cF$ be a set of signatures and let $\ket{g}=x\ket{0}+y\ket{1}$ be such that $g\notin\cF$ and there is no gadget over $\cF$ with effective signature $g$.
Consider a signature grid $\Omega = (\cF\cup\{g\}, G, \pi)$ in which $g$ is assigned $n$ times.
By pulling out the dependence on $g$, the corresponding Holant can be written as:
\[
 \Holant_{\Omega} = \sum_{j+k=n} c_{jk} x^j y^k,
\]
where the $c_{jk}$ are (as yet unknown) constants.
Now suppose it is possible to find polynomially many gadgets over $\cF$, having distinct unary signatures $\{a_m\ket{0}+b_m\ket{1}\}_{m\in\NN}$, and with the size of each gadget bounded by a polynomial.
By replacing all occurrences of $g$ in $\Omega$ with $a_m\ket{0}+b_m\ket{1}$, we construct a new signature grid $\Omega_m$ which only uses signatures from $\cF$.
Its Holant is:
\[
 \Holant_{\Omega_m} = \sum_{j+k=n} c_{jk} a_m^j b_m^k.
\]
If we can determine the Holant values for sufficiently many distinct $\Omega_m$ with sufficiently nice properties, we can set up a system of linear equations for the coefficients $c_{jk}$, and thus compute those in time polynomial in $n$.
Those coefficients, in turn, can be used to compute $\Holant_{\Omega}$, again in polynomial time.
In this case, we say that $g$ can be \emph{interpolated} over $\cF$ and conclude:
\[
 \Holp{\cF\cup\{g\}} \leq_T \Holp{\cF}.
\]
A more detailed definition can be found in \cite{cai_holant_2009}.

\subsection{Properties of signatures}
\label{s:properties}

We now introduce some definitions and terminology that will be used throughout the paper.

A signature is called \emph{symmetric} if its value as a function depends only on the Hamming weight of the inputs -- in other words, it is invariant under any permutation of the inputs.
Symmetric functions are often written in the short-hand notation $f=[f_0,f_1,\ldots, f_n]$, where $f_k$ is the value $f$ takes on inputs of Hamming weight $k$.

Using language from quantum theory, a signature is \emph{entangled} if it cannot be written as a tensor product of unary signatures.
For example, $\ket{01}+\ket{11}$ is not entangled because it can be rewritten as $(\ket{0}+\ket{1})\otimes\ket{1}$.
On the other hand, the binary equality signature $\ket{00}+\ket{11}$ is entangled.
If $k\geq 2$, a $k$-ary signature can be partially decomposable into a tensor product, e.g. $\ket{0}\otimes(\ket{00}+\ket{11})$.
We say a signature is \emph{genuinely entangled} if there is no way of decomposing it as a tensor product of signatures of any arity.
When the meaning is clear from context, we may sometimes drop the word `genuinely'.
A genuinely entangled signature of arity at least 3 is said to be \emph{multipartite entangled} (as opposed to the bipartite entanglement in a signature of arity 2).
A non-genuinely entangled signature has multipartite entanglement if it has a tensor factor corresponding to a genuinely entangled signature of arity at least 3.

Among genuinely entangled ternary signatures, we distinguish two types, also known as `entanglement classes' \cite{dur_three_2000}.
Let:
\[
 \ket{f} = a_0\ket{000} + a_1\ket{001} + a_2\ket{010} + a_3\ket{011} + a_4\ket{100} + a_5\ket{101} + a_6\ket{110} + a_7\ket{111},
\]
where $a_0,\ldots,a_7\in\CC$.
Then $\ket{f}$ has GHZ type if the following polynomial in the coefficients is non-zero:
\[
 (a_0a_7 - a_2a_5 + a_1a_6 - a_3a_4)^2 - 4(a_2a_4-a_0a_6)(a_3a_5-a_1a_7).
\]
The signature $\ket{f}$ has $W$ type if the above polynomial is zero, and furthermore the following expression is true:
\[
 ((a_0a_3\neq a_1a_2) \vee (a_5a_6\neq a_4a_7)) \;\wedge\; ((a_1a_4\neq a_0a_5) \vee (a_3a_6\neq a_2a_7)) \;\wedge\; ((a_3a_5\neq a_1a_7) \vee (a_2a_4 \neq a_0a_6)).
\]
If the polynomial is zero and the above expression evaluates to false, then the signature is not genuinely entangled \cite{li_simple_2006}.

All GHZ-type signatures can be transformed to $\ket{\GHZ}=\ket{000}+\ket{111}$ by some local holographic transformation, i.e. if $\ket{f}$ has GHZ type then there exist 2 by 2 invertible matrices $A,B,C$ such that $\ket{\GHZ}=(A\otimes B\otimes C)\ket{f}$.
Similarly all $W$-type signatures can be transformed to $\ket{W}=\ket{001}+\ket{010}+\ket{100}$ by a local holographic transformation \cite{dur_three_2000}.

We also consider generalised GHZ signatures of the form $\ket{\GHZ_n}=\ket{0}\t{n}+\ket{1}\t{n}$ for $n\in\NN^*$.
These are the same as the $n$-ary equality signatures, i.e.\ $\ket{\GHZ_n}=\ket{=_n}$.

There are many other types of genuinely entangled signatures for higher arities \cite{lamata_inductive_2006,verstraete_four_2002}, but those are not directly relevant to this paper.

Given a set of signatures that contains multipartite entanglement, we can assume without loss of generality that we have a genuinely multipartite-entangled signature.
To see this, consider a non-zero signature $\ket{\psi}$ that has multipartite entanglement, and suppose $\ket{\psi}=\ket{\psi_1}\otimes\ket{\psi_2}$.
Then at least one of the tensor factors must have multipartite entanglement, assume this is $\ket{\psi_1}$.
Now, $\ket{\psi}$ is non-zero, so $\braket{x}{\psi_2}$ must be non-zero for some bit string $x$.
Thus we can realise $\ket{\psi_1}$ by connecting all inputs associated with $\ket{\psi_2}$ to $\ket{0}$ or $\ket{1}$, as appropriate.

\section{Existing results}
\label{s:existing}

It is difficult to determine the complexity of the general Holant problem.
Thus, all existing dichotomies make use of one or more simplifying assumptions: either they assume the availability of certain signatures in all signature sets considered \cite{cai_dichotomy_2011,cai_valiants_2006}, or they only consider signature sets containing functions taken from more restricted families, e.g.\ symmetric functions \cite{cai_holant_2012,cai_complete_2013} or functions taking only real \cite{cai_dichotomy_2017} or even non-negative real values \cite{lin_complexity_2016}.

Among others, the following variants of the Holant problem have been considered:
\begin{itemize}
 \item $\Holp[*]{\cF} = \Holp{\cF\cup\cU}$, where $\cU$ is the set of all unary signatures \cite{cai_dichotomy_2011},
 \item $\Holp[+]{\cF} = \Holp{\cF\cup\{\ket{0},\ket{1},\ket{+},\ket{-}\}}$, where $\ket{\pm}=\ket{0}\pm\ket{1}$ \cite{backens_new_2017}, and
 \item $\Holp[c]{\cF} = \Holp{\cF\cup\{\ket{0},\ket{1}\}}$ \cite{cai_holant_2012,cai_dichotomy_2017}.
\end{itemize}
Several variants of complex-weighted Boolean counting constraint satisfaction problems (\csp) have also been expressed in the Holant framework.
These include:
\begin{itemize}
 \item $\csp(\cF) = \Holp{\cF \mid \cG}$, where $\cG$ is the set containing all equality signatures \cite{cai_complexity_2014}:
  \[
   \cG = \left\{ \ket{0}\t{n}+\ket{1}\t{n} \,\middle|\, n\in\NN^* \right\} = \{ \ket{0}+\ket{1}, \ket{00}+\ket{11}, \ket{000}+\ket{111}, \ldots \},
  \]
 \item $\#\text{R$_3$-CSP}(\cF) = \Holp{\cF\mid\{\ket{0}+\ket{1}, \ket{00}+\ket{11}, \ket{000}+\ket{111}\}}$ \cite{cai_complexity_2014},
 \item $\csp_2^c(\cF) = \Hol( \cF \mid \{ \ket{0},\ket{1}\}\cup\{ \ket{0}\t{2n}+\ket{1}\t{2n} \mid n\in\NN^* \} )$ \cite{cai_dichotomy_2017}.
\end{itemize}
Unlike the general \csp{} framework, $\#\text{R$_3$-CSP}$ problems contain only equality signatures of arity three or less.
The \csp$_2^c$ problems contain only equality signatures of even arity, but also the signatures pinning inputs to 0 or 1, respectively. 

Existing results include full dichotomies for \Hol$^*$ \cite{cai_dichotomy_2011}, \Hol$^+$ \cite{backens_new_2017}, and all three \csp{} variants mentioned above \cite{cai_complexity_2014,cai_dichotomy_2017}.
There are also dichotomies for \Hol$^c$ with symmetric complex-valued signatures \cite{cai_dichotomy_2011}, \Hol$^c$ with arbitrary real-valued signatures \cite{cai_dichotomy_2017}, \Hol{} with symmetric complex-valued signatures \cite{cai_complete_2013}, and \Hol{} with arbitrary non-negative real-valued signatures \cite{lin_complexity_2016}.

We build in particular on the dichotomies for \Hol$^+$ and real-valued \Hol$^c$, as well as on other results classifying the hardness of Holant problems with symmetric signatures on 3-regular graphs \cite{cai_holant_2012}.

\subsection{Preliminary definitions}
\label{s:preliminary}

The following definitions will be used throughout the dichotomy theorems.
Write:
\begin{equation}\label{eq:matrices}
 T=\begin{pmatrix}1&0\\0&e^{i\pi/4}\end{pmatrix}, \qquad X = \begin{pmatrix}0&1\\1&0\end{pmatrix} \qquad\text{and}\qquad K = \begin{pmatrix}1&1\\i&-i\end{pmatrix},
\end{equation}
where $i^2=-1$.
Then let:
\begin{itemize}
 \item $\cT$ be the set of all unary and binary signatures,
 \item $\cE$ the set of all signatures that are non-zero only on two inputs $x$ and $\bar{x}$, where $\bar{x}$ denotes the bit-wise complement of $x$,
 \item $\cM$ the set of all signatures that are non-zero only on inputs of Hamming weight at most 1,
 \item $\cA$ the set of all \emph{affine signatures}, i.e.\ functions of the form $f(x) = c i^{l(x)} (-1)^{q(x)} \chi$, where $c\in\CC$, $l(x)$ is a linear Boolean function, $q(x)$ is a quadratic Boolean function, and $\chi$ is the indicator function for an affine space, and
 \item $\cL$ the set of all signatures $f$ with the property that, for any bit string $x$ in the support of $f$:
  \[
   \left( \bigotimes_{j=1}^{\operatorname{arity}(f)} T^{x_j} \right) \ket{f} \in \cA.
  \]
\end{itemize}
Denote by $\avg{\cF}$ the closure of the signature set $\cF$ under tensor products.
It is straightforward to see that $\cA=\avg{\cA}$ and $\cL=\avg{\cL}$, i.e.\ these signature sets are already closed under tensor products.

If $n$ is a positive integer, we denote by $[n]$ the set $\{1,2,\ldots,n\}$.

\subsection{Dichotomies for Holant variants}

The Holant dichotomies build up on each other.
Dichotomies with fewer freely-available signatures refer to dichotomies for problems with more freely-available signatures, e.g.\ the dichotomy for \Hol$^+$ incorporates the dichotomy for \Hol$^*$.
We recap those theorems that will be used in deriving the dichotomy for \Hol$^c$.

\begin{thm}[Theorem 2.2, \cite{cai_dichotomy_2011}]\label{thm:Holant-star}
 Let $\cF$ be any set of complex valued functions in Boolean variables. The problem $\Holp[*]{\cF}$ is polynomial time computable if:
 \begin{itemize}
  \item $\cF\subseteq\avg{\cT}$, or
  \item $\cF\subseteq\avg{O\circ\mathcal{E}}$, where $O$ is a complex orthogonal 2 by 2 matrix, or
  \item $\cF\subseteq\avg{K\circ\mathcal{E}}$, or
  \item $\cF\subseteq\avg{K\circ\mathcal{M}}$ or $\cF\subseteq\avg{KX\circ\mathcal{M}}$.
 \end{itemize}
 In all other cases, $\Holp[*]{\cF}$ is \sP-hard.
\end{thm}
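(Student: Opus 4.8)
The plan is to prove the two implications -- tractability of the four listed families, and $\sP$-hardness otherwise -- separately, exploiting throughout the feature that makes the $\Holp[*]{\cdot}$ setting especially amenable: since the set $\cU$ of all unary signatures is the whole of $\CC^2$, it is invariant under every invertible holographic transformation, so $\Holp[*]{\cF}\equiv_T\Holp[*]{M\circ\cF}$ for every invertible $2\times2$ matrix $M$, and one may normalise $\cF$ up to holographic equivalence. Moreover, connecting any input of an $n$-ary signature to the freely-available $\ket0$ or $\ket1$ realises all of its sub-signatures of arity $n-1$; iterating, the complexity of $\Holp[*]{\cF}$ is governed by the low-arity signatures extractable from $\cF$, which makes a finite case analysis possible.

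For the tractable direction I would exhibit an algorithm for each family. If $\cF\subseteq\avg{\cT}$, every signature has arity at most $2$, so a signature grid is a tensor network of $1$- and $2$-index tensors; it decomposes into disjoint paths and cycles, and the Holant is a product of vector entries and traces of products of $2\times2$ matrices -- polynomial time. If $\cF\subseteq\avg{O\circ\cE}$ or $\cF\subseteq\avg{K\circ\cE}$, applying $O^{-1}$ (resp.\ $K^{-1}$) moves every signature into $\avg{\cE}$, i.e.\ supported on a single complementary pair $\{x,\bar x\}$, while the induced transformation sends the ``edge'' signature $=_2$ to a non-zero scalar multiple of itself (using $O^TO=I$) or of $\ket{01}+\ket{10}$ (using $K^TK=2X$); in either case the signatures impose an affine system of constraints on the edge variables, so the Holant is a sum over the solutions of that linear system, each term a product of signature values, computable by Gaussian elimination in polynomial time. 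If $\cF\subseteq\avg{K\circ\cM}$ or $\cF\subseteq\avg{KX\circ\cM}$, transforming by $K^{-1}$ (resp.\ $(KX)^{-1}$) moves every signature into $\avg{\cM}$ (or $\avg{X\circ\cM}$) and the edge signature into a scalar multiple of $\ket{01}+\ket{10}$; the grid then admits only configurations in which at most one incident edge is set to $1$ at each vertex, so the signed weighted count decomposes over connected components into a polynomial-size sum -- this is the matchgate-transformable case of Valiant's holographic-algorithm framework \cite{valiant_holographic_2008}, and is polynomial-time computable.

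For the hardness direction, fix $\cF$ lying in none of the four families; the goal is to show $\Holp[*]{\cF}$ simulates a known $\sP$-hard problem. First settle the binary case: classify, up to holographic transformation, all single binary signatures, and note that any one whose $\Holp[*]{\cdot}$ is not already covered above yields $\sP$-hardness, via polynomial interpolation together with a reduction to a concrete hard problem such as counting vertex covers or a hard $\csp$ instance. Then treat higher arity: since $\cF\not\subseteq\avg{\cT}$ there is a genuinely entangled signature of arity $\geq 3$, and by pinning inputs and composing gadgets one realises a ternary genuinely entangled signature; unless that signature together with all further realisable small signatures happens to fit simultaneously into one of the $\cE$- or $\cM$-type normal forms -- which are exactly the excluded cases -- one realises a small signature set already known to define a $\sP$-hard Holant problem (for instance a hard symmetric ternary set on $3$-regular graphs, or a hard $\csp$ instance), completing the reduction.

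The main obstacle is this last step. The delicate point is to show that \emph{every} signature set outside the four tractable families can, using only the freely-available unaries, gadget composition, and polynomial interpolation, simulate a genuinely hard instance; in particular one must handle the ``boundary'' signatures that are almost -- but not quite -- matchgate signatures, or almost of $\cE$-type, and verify that no holographic transformation rescues them. Keeping the realised hard gadget entirely within the signatures derivable from $\cF$, while bounding all gadget sizes polynomially, is where essentially all the effort lies.
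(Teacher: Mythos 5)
First, a point of reference: the paper you are working from does not prove this statement at all. It is Theorem~2.2 of \cite{cai_dichotomy_2011}, quoted verbatim in Section~\ref{s:existing} as a known ingredient of the \Hol$^c$ dichotomy, so there is no in-paper proof to compare against; any proof must reconstruct the (long) argument of that reference. Your tractability sketch is broadly the standard one and is sound in outline: $\avg{\cT}$ yields paths and cycles, the $\cE$-type cases reduce under the holographic transformation (using $O^TO=I$, respectively $K^TK=2X$) to evaluating a weighted sum over the solutions of an affine system, and the $\cM$-type cases become instances in which each vertex absorbs at most one ``1'', forcing every connected component with non-zero contribution to be a tree or unicyclic and hence to admit only linearly many configurations. (Your attribution of this last case to matchgates is off --- no Pfaffian or planarity enters --- but the algorithm you describe is correct in spirit.)

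The genuine gap is the hardness direction, which is where all the content of the theorem lives. Your ``first settle the binary case'' step is vacuous as stated: any set consisting only of unary and binary signatures is contained in $\avg{\cT}$ and is therefore tractable, so no single binary signature can yield \sP-hardness in the \Hol$^*$ setting, and there is no ``binary hardness classification'' to bootstrap from. What is actually required --- and what your higher-arity paragraph only restates as a goal --- is (i) a proof that from any genuinely entangled signature of arity at least $3$, together with arbitrary unaries, one can realise a \emph{symmetric non-degenerate ternary} signature by gadgets; (ii) a complete complexity classification of bipartite problems of the form $\Holant([y_0,y_1,y_2]\mid[x_0,x_1,x_2,x_3])$ (the analogues of Theorems \ref{thm:W-state} and \ref{thm:GHZ-state} quoted in this paper), established via polynomial interpolation and reductions from hard \csp{} instances; and (iii) an argument that whenever $\cF$ escapes all four families, some pair of realisable signatures lands on the hard side of that classification --- including exactly the boundary cases (almost-$\cE$, almost-$\cM$ after all possible holographic transformations) that you flag as delicate. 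None of these three steps is supplied, so the proposal is a plausible outline of the known proof strategy rather than a proof.
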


\begin{thm}[Theorem 6, \cite{cai_holant_2012}]\label{thm:symmetric_Holant-c}
 Let $\cF$ be a set of complex symmetric signatures. $\Holp[c]{\cF}$ is \sP-hard unless $\cF$ satisfies one of the following conditions, in which case it is tractable:
 \begin{itemize}
  \item $\Holp[*]{\cF}$ is tractable, or
  \item there exists a 2 by 2 matrix $S\in\cS$ such that $\cF\subseteq S\circ\cA$, where:
   \begin{equation}\label{eq:cS_definition}
    \cS = \left\{ S \,\middle|\, (S^T)\t{2}(\ket{00}+\ket{11}), S^T\ket{0}, S^T\ket{1} \in \cA \right\}.
   \end{equation}
 \end{itemize}
\end{thm}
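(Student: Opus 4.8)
The plan is to establish both directions of the stated dichotomy; the tractability direction is essentially routine, while the hardness direction carries all of the work.

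For tractability: if $\Holp[*]{\cF}$ is polynomial-time computable then so is $\Holp[c]{\cF}$, since every signature grid for the latter is also one for the former, so $\Holp[c]{\cF}\leq_T\Holp[*]{\cF}$. If instead $\cF\subseteq S\circ\cA$ for some $S\in\cS$, I would write $\Holp[c]{\cF}$ in the bipartite form $\Holp{\cF\mid\{=_2,\ket 0,\ket 1\}}$ -- legitimate because a unary signature composed through $=_2$ is that same unary signature -- and apply the holographic transformation by $S^{-1}$ on the left. This carries $\cF$ into $S^{-1}\circ\cF\subseteq\cA$ and carries the right-hand signatures to $(S^T)\t{2}(\ket{00}+\ket{11})$, $S^T\ket 0$ and $S^T\ket 1$, all of which lie in $\cA$ by the definition~\eqref{eq:cS_definition} of $\cS$. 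The transformed problem is a restriction of $\Holp{\cA\mid\cA}$, and the Holant of an all-affine signature grid is a quadratic Gauss sum, which is polynomial-time computable; so this case is tractable as well.

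For hardness, suppose $\cF$ is a set of symmetric signatures satisfying neither condition. If every signature of $\cF$ had arity at most $2$, then $\cF\subseteq\avg{\cT}$ and $\Holp[*]{\cF}$ would be tractable by Theorem~\ref{thm:Holant-star}; so fix $f=[f_0,\ldots,f_n]\in\cF$ with $n\geq 3$. Using the pins $\ket 0,\ket 1$ one can connect any chosen inputs of $f$ to $0$ or $1$, realising every ``contiguous sub-signature'' $[f_j,\ldots,f_{j+k}]$, and connecting two inputs of $f$ through $=_2$ realises the self-loop $[f_0+f_2,\ldots,f_{n-2}+f_n]$; iterating these operations, together with polynomial interpolation, makes a large family of symmetric signatures of arity at most $n$ available alongside the pins. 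The goal is to show that, unless $\cF$ lies in one of the two tractable classes, this family realises the ternary equality $=_3$, and hence every $=_m$ (trees of $=_3$-gadgets, with a self-loop on $=_3$ supplying $=_1$). Then $\Holp[c]{\cF}$ is at least as hard as $\csp(\cF)$, which by the complex-weighted Boolean $\csp$ dichotomy~\cite{cai_complexity_2014} is \sP-hard unless $\cF\subseteq\cA$ or $\cF$ is of product type (a tensor product of unary signatures with binary equality and binary disequality). But $\cF\subseteq\cA$ satisfies the second tractable condition of the theorem with $S=I\in\cS$, and a product-type $\cF$ lies in $\avg{\cE}$, so that $\Holp[*]{\cF}$ is tractable by Theorem~\ref{thm:Holant-star}; both are excluded by hypothesis. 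Hence $=_3$ cannot be realised, and the whole of the hardness direction comes down to identifying which symmetric signatures force this failure.

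That identification is the main obstacle. Whereas the dichotomy for $\Holp[*]{\cF}$ (Theorem~\ref{thm:Holant-star}) permits an arbitrary invertible transformation when normalising a signature, here any transformation used must simultaneously carry the two pins and the edge-equality $=_2$ back into affine signatures -- that is, it is forced to lie in $\cS$ -- which is precisely why the constrained set $\cS$, rather than the set of all invertible matrices, appears in the statement. I would reduce $f$ by pinning and self-loops to its ternary sub-signatures, split the analysis into the GHZ-type, $W$-type, and degenerate cases, and in each decide whether a gadget over $f$ and the pins still realises $=_3$. The expectation is that $=_3$ fails to be realisable only when $f\in S\circ\cA$ for some $S\in\cS$, or when $f$ sits, after a suitable transformation, inside one of the classes for which $\Holp[*]{\cF}$ is already tractable. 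The residual genuinely-hard instances -- chiefly $W$-type ternary signatures that are not of matchings type -- I would handle by appealing to the known \sP-hardness results for symmetric Holant problems on $3$-regular graphs~\cite{cai_holant_2012}. Finally one checks that if every signature of $\cF$ survives this analysis, then a single $S\in\cS$ works for all of $\cF$ at once (or else $\Holp[*]{\cF}$ is outright tractable), so that $\cF$ satisfies one of the two tractability conditions after all, contradicting the assumption and completing the dichotomy.
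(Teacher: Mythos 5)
This statement is quoted from Cai, Lu and Xia (Theorem~6 of \cite{cai_holant_2012}); the present paper does not prove it, so your proposal can only be measured against the known proof in that reference. Your architecture does match it: reduce to a non-degenerate ternary symmetric signature by pinning and self-loops, split into GHZ-type, $W$-type and degenerate cases, handle the GHZ case by a holographic transformation to $[1,0,0,1]$ followed by the complex-weighted $\csp$ dichotomy (with the transformation constrained to $\cS$ because the pins and the edge-equality $=_2$ must also land in $\cA$), and handle the $W$-type case by the $3$-regular hardness results. Your tractability direction is also essentially correct, including the observation that the product-type tractable class of $\csp$ sits inside $\avg{\cE}$ and is therefore absorbed by the \Hol$^*$ condition.

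The gap is that the hardness direction is a plan, not a proof: the entire content of the theorem lives in the case analysis you defer with ``the expectation is that\ldots''. Concretely, three things are missing. First, your framing around $=_3$ is off: hardness is not equivalent to realisability of $=_3$ (the sentence ``Hence $=_3$ cannot be realised'' does not follow from what precedes it), since $W$-type signatures outside $K\circ\cM\cup KX\circ\cM$ yield \sP-hardness via Theorem~\ref{thm:W-state} without any equality gadget, and conversely $=_3$ is itself affine, so realising it is compatible with tractability. The correct organisation is the one in Theorems~\ref{thm:W-state} and~\ref{thm:GHZ-state}: for a GHZ-type ternary signature $M\circ[1,0,0,1]$ one gets $\Holp[c]{\cF}\equiv_T\csp(M^{-1}\circ(\cF\cup\{\ket0,\ket1\})\cup M^T\circ\{=_2,\ket0,\ket1\})$ subject to $\omega$-normalisation side conditions, and one must then unwind the $\csp$ tractability conditions into the two bullet points of the statement; for $W$-type one must show the only escape is $\cF\subseteq K\circ\cM$ or $KX\circ\cM$. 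Second, you do not address how to guarantee a \emph{non-degenerate ternary} symmetric signature exists after pinning and self-loops when the given $f$ has arity $\geq 4$ (this requires its own argument, since self-loops and pins can degenerate a signature). Third, the final step --- that a single $S\in\cS$ works simultaneously for all of $\cF$ --- is asserted rather than argued. As it stands the proposal correctly identifies the strategy and the role of $\cS$, but does not constitute a proof.
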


A list of elements of $\cS$ is contained in an appendix of the full version of \cite{cai_holant_2012}.

\begin{thm}[Theorem 13, \cite{backens_new_2017}]
 Let $\cF$ be a set of complex-valued signatures. $\Holp[+]{\cF}$ is in \FP{} if $\cF$ satisfies one of the following conditions:
 \begin{itemize}
  \item $\Holp[*]{\cF}$ is in \FP, or
  \item $\cF\subseteq\cA$.
 \end{itemize}
 In all other cases, the problem is \sP-hard.
\end{thm}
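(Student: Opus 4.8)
The plan is to prove the two tractability claims first, since they are short, and then to devote most of the work to the hardness direction, which is where the entanglement machinery enters.

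For tractability: if $\Holp[*]{\cF}$ is in \FP{}, then so is $\Holp[+]{\cF}$, because $\{\ket0,\ket1,\ket+,\ket-\}$ is a subset of all unary signatures and hence $\Holp[+]{\cF}$ is simply a restriction of $\Holp[*]{\cF}$. If $\cF\subseteq\cA$, I would first note that $\ket0,\ket1,\ket+,\ket-\in\cA$ (the first two are indicators of affine points, $\ket+$ is the all-ones unary signature, and $\ket-(x)=(-1)^{x}$ has the required form with a degree-one phase), so $\Holp[+]{\cF}=\Holp{\cF\cup\{\ket0,\ket1,\ket+,\ket-\}}$ is a Holant problem whose signatures all lie in $\cA=\avg\cA$; tractability then follows from the standard fact that the Holant of affine signatures can be evaluated in polynomial time via linear algebra over $\FF$ together with Gauss-sum evaluation.

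For hardness, assume $\cF$ lies in neither tractable case. Since $\Holp[*]{\cF}\notin\FP$, Theorem \ref{thm:Holant-star} gives in particular $\cF\not\subseteq\avg\cT$. As $\avg\cT$ is exactly the set of signatures with no multipartite entanglement, $\cF$ contains a signature with multipartite entanglement, and by the reduction argument at the end of Section \ref{s:properties} we may assume without loss of generality that $\cF$ itself contains a genuinely entangled signature $\ket f$ of arity $n\geq3$. The goal is now to realise over $\cF\cup\{\ket0,\ket1,\ket+,\ket-\}$ a ternary symmetric genuinely entangled signature $\ket g$ for which $\Holp[c]{\{\ket g\}}$ is \sP-hard; since such a $\ket g$ is realisable and $\ket0,\ket1$ are available, we get $\Holp[c]{\{\ket g\}}\leq_T\Holp[+]{\cF}$, and then Theorem \ref{thm:symmetric_Holant-c} finishes the argument. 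The construction has two technical stages. First, reduce the arity to $3$: I would prove a lemma (this is exactly where $\ket+,\ket-$ are needed beyond $\ket0,\ket1$) that any genuinely entangled signature of arity $\geq4$ has some input and some $\ket b\in\{\ket0,\ket1,\ket+,\ket-\}$ whose connection to that input leaves a genuinely entangled signature --- the model case being $\ket{\GHZ_n}$, which collapses on $\ket0$ and $\ket1$ but not on $\ket+$ or $\ket-$ --- the general case resting on known facts about which single-party operations can destroy genuine entanglement. Iterating yields a ternary genuinely entangled $\ket h$, which by the classification recalled in Section \ref{s:properties} is of GHZ type or of $W$ type. Second, from $\ket h$ build a ternary symmetric genuinely entangled signature, using small gadgets that pin inputs and wire together copies of $\ket h$ through $=_2$ (available by the midpoint construction), handling the GHZ and $W$ cases separately via the normal forms $\ket{\GHZ}$ and $\ket W$.

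The crux, and the step I expect to be the main obstacle, is that arbitrary holographic transformations --- which would trivialise a ternary signature --- are \emph{not} available here, since they need not preserve $\{\ket0,\ket1,\ket+,\ket-\}$; only the transformations whose action permutes this set (essentially the Hadamard matrix and $X$, together with phases) may be applied freely. As a consequence the symmetrisation step does not always produce a signature outside the \Hol$^c$-tractable families: certain GHZ-type signatures, after pinning and symmetrising, yield only signatures of the shape $\ket0\t3+c\ket1\t3$ or their Hadamard images, which lie in $\cA$, or more generally in an $S\circ\cA$ with $S\in\cS$; and a similar degeneracy can occur on the $W$-type side. The decisive claim is that every such exceptional outcome forces $\cF$ itself to have satisfied one of the two tractable hypotheses --- either all of $\cF$ is affine, or $\cF$ falls into one of the \Hol$^*$-tractable sets of Theorem \ref{thm:Holant-star} --- contradicting our standing assumption. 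Establishing this requires a careful case analysis of how a non-tractable $\cF$ could fail to generate a genuinely hard symmetric ternary signature, propagating the structural constraint back from the gadget to $\cF$; this is where the bulk of the proof effort lies. Once it is in place, Theorem \ref{thm:symmetric_Holant-c} applied to $\{\ket g\}$ closes the dichotomy.
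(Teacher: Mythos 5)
First, note that this theorem is not proved in the present paper at all: it is imported from \cite{backens_new_2017}, so the only thing to compare against is the argument given there (whose key ingredients are recapped here as Theorem \ref{thm:popescu_rohrlich} and Lemmas \ref{lem:GHZ_symmetrise}--\ref{lem:W_symmetrise}, and whose overall shape reappears in Lemma \ref{lem:arity3_hardness}). Your tractability direction and your first two stages of the hardness direction (reduce to a genuinely entangled ternary signature using projections onto $\ket{0},\ket{1},\ket{\pm}$, then symmetrise) do follow that route.

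The genuine gap is in your endgame. You propose to realise a single symmetric ternary entangled $\ket{g}$ such that $\Holp[c]{\{\ket{g}\}}$ is \sP-hard and then invoke Theorem \ref{thm:symmetric_Holant-c} on the singleton. This cannot work in the cases that matter most: the realisable symmetric ternary signature may well be something like $\ket{000}+\ket{111}$, which lies in $\cA$, so that $\Holp[c]{\{\ket{g}\}}$ is tractable on its own --- yet $\Holp[+]{\cF}$ must still be shown \sP-hard whenever some \emph{other}, possibly high-arity and non-symmetric, element of $\cF$ is non-affine. Hardness here is a property of the whole set, not of the gadget, and Theorem \ref{thm:symmetric_Holant-c} gives you no purchase on a set containing non-symmetric signatures. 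The missing engine is the reduction to \csp: once a symmetric GHZ-type ternary signature is in hand, Theorem \ref{thm:GHZ-state} yields $\Holp[+]{\cF}\equiv_T\csp(M\circ\cF\cup\cdots)$ for a suitable holographic transformation $M$, and the \emph{general} complex \csp{} dichotomy then gives \sP-hardness unless everything lands in $\cA$ or $\avg{\cE}$; the presence of $\ket{+},\ket{-}$ on both sides is what pins $M$ down so that these exceptional cases translate back to exactly ``$\cF\subseteq\cA$'' or a \Hol$^*$-tractable class (and not the larger $S\circ\cA$ families of the \Hol$^c$ setting, which you in fact misattribute to \Hol$^+$). The $W$-type branch similarly needs Theorem \ref{thm:W-state} applied with binary signatures realised from the rest of $\cF$, not a singleton argument. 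Your closing paragraph gestures at ``propagating the structural constraint back to $\cF$'', but without the \csp{} reduction there is no mechanism by which the non-affineness of an arbitrary element of $\cF$ produces hardness, so the proof as proposed does not close.
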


\begin{thm}[Theorem 4.1, \cite{cai_dichotomy_2017}]
 A \csp$_2^c(\cF)$ problem has a polynomial time algorithm if one of the following holds:
 \begin{itemize}
  \item $\cF\subseteq\avg{\cE}$,
  \item $\cF\subseteq\cA$,
  \item $\cF\subseteq T\circ\cA$, or
  \item $\cF\subseteq\cL$.
 \end{itemize}
 Otherwise, it is \sP-hard.
\end{thm}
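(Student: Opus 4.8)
Since the statement is a dichotomy, my plan would split into proving that the four listed classes are tractable and that every other $\cF$ is \sP-hard, with essentially all the effort in the latter. Recall that $\csp_2^c(\cF)=\Holp{\cF \mid \{\ket 0,\ket 1\}\cup\{\ket 0\t{2n}+\ket 1\t{2n}\mid n\in\NN^*\}}$. For the tractable cases: if $\cF\subseteq\avg{\cE}$, I would note that $\ket 0,\ket 1$ and each $\ket 0\t{2n}+\ket 1\t{2n}$ also lie in $\cE$ (since $\ket 1$ is the bit-complement of $\ket 0$, and $\ket 1\t{2n}$ of $\ket 0\t{2n}$), so—reconnecting the bipartition with $=_2\in\avg{\cE}$—the instance sits inside $\Holp[*]{\avg{\cE}}$, which is tractable by the $\avg{O\circ\cE}$ branch of Theorem~\ref{thm:Holant-star} with $O=I$. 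If $\cF\subseteq\cA$, every signature in the grid is affine (the pinnings and even equalities being affine), and a Holant sum using only affine signatures is polynomial-time computable by the standard affine-counting algorithm (cf.\ \cite{cai_complexity_2014}). If $\cF\subseteq T\circ\cA$, I would apply the holographic transformation with $M=T^{-1}$: the left side becomes affine while $\ket 0,\ket 1$ map to scalar multiples of themselves and $\ket 0\t{2n}+\ket 1\t{2n}$ to $\ket 0\t{2n}+i^n\ket 1\t{2n}$, all still affine, reducing to the previous case. If $\cF\subseteq\cL$—the delicate case—I would first verify that $\cL$ is closed under pinning an input to $0$ or $1$, absorb the unary pinnings into the left-hand signatures to obtain a $\#\mathrm{CSP}_2$ instance over $\cL$, and then exploit that every variable is now read an even number of times: the order-$8$ phases $e^{i\pi x_j/4}$ built into the definition of $\cL$ recombine, over each variable's (even) set of occurrences, into global phases that are powers of $i$ and hence affine, so each term of the Holant sum factors into affine-signature values and the total is once more an affine counting problem.

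For the hard direction, suppose $\cF$ lies in none of the four classes. The plan is to realise, from $\cF$ together with the free even equalities and pinnings, enough additional signatures to reduce to a problem whose \sP-hardness is already established. First I would use the reduction at the end of Section~\ref{s:properties} to reduce to the case that $\cF$ contains a genuinely entangled signature of arity $\geq 3$, handling the residual situations (all signatures of arity $\leq 2$, or no multipartite entanglement anywhere) directly, largely via Theorem~\ref{thm:Holant-star}. From such a signature I would adapt the $\Holp[+]{\cdot}$ gadget constructions to build a genuinely entangled \emph{ternary} signature, then split on whether it has GHZ type or $W$ type using the polynomial criteria of Section~\ref{s:properties}. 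In the GHZ-type case, a local holographic transformation sends the ternary signature to $\ket{\GHZ}=\ket{=_3}$, which together with the even equalities yields equalities of all arities, turning the problem into a $\csp$ or $\#\mathrm{R}_3\text{-CSP}$ instance over a transformed signature set, where the known \csp{} dichotomies conclude. In the $W$-type case I would invoke the known constructions by which $W$-type ternary signatures force \sP-hardness in the $\#\mathrm{CSP}_2$ setting outside identifiable structured configurations. Finally, in each case where the relevant hardness gadget cannot be built, I would argue that the obstruction constrains $\cF$ so severely that $\cF$ must in fact satisfy $\cF\subseteq\avg{\cE}$, $\cF\subseteq\cA$, $\cF\subseteq T\circ\cA$, or $\cF\subseteq\cL$—contradicting the hypothesis and closing the dichotomy.

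I expect the main obstacle to be the tightness of the hard direction: showing that evading \emph{every} standard reduction forces $\cF$ into one of the four tractable classes requires a long case analysis, complicated by the impoverished free toolkit of $\csp_2^c$—only even equalities and the two pinnings are available, not all unaries and not odd equalities—so that several gadget constructions which work for $\Holp[*]{\cdot}$ or ordinary $\csp$ must be replaced by parity-respecting substitutes, and some cannot be replaced at all. A more conceptual difficulty is the class $\cL$ itself: one must correctly isolate its defining obstruction—the order-$8$ phases that survive the local $T$-transformation—and show that it is a genuine barrier to any hardness gadget rather than an artefact of an insufficiently clever construction.
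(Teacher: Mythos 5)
First, a point of context: this statement is quoted verbatim from \cite{cai_dichotomy_2017} as an existing result; the present paper contains no proof of it, so there is no internal proof to compare your attempt against. Judged on its own terms, your tractability direction is essentially the standard one (the $\avg{\cE}$, $\cA$, and $T\circ\cA$ cases are fine in outline, though the $\cL$ case is only a heuristic --- the recombination of the order-$8$ phases into affine data under the even-occurrence constraint is precisely the nontrivial content of that algorithm and needs to be carried out, not asserted), but the hardness direction has two genuine gaps. The first is structural: your entire plan terminates in the step ``argue that the obstruction constrains $\cF$ so severely that $\cF$ must lie in one of the four classes,'' which is not an argument but a restatement of the theorem; the actual proof in \cite{cai_dichotomy_2017} spends almost all of its length on exactly this step, via a case analysis on binary and low-arity signatures, arity-reduction within the parity constraint, and separate treatments of the $\cA$, $T\circ\cA$, and $\cL$ obstructions. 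Nothing in your sketch indicates how any single one of these cases would close.

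The second gap is a concrete technical error in the GHZ-type branch. A ternary GHZ-type signature satisfies $\ket{\GHZ}=(A\otimes B\otimes C)\ket{\psi}$ for three \emph{different} invertible matrices; this is an entanglement-classification statement, not a holographic reduction. Valiant's Holant theorem only permits applying one fixed matrix to every edge on a side of the bipartition, so you cannot ``send the ternary signature to $\ket{=_3}$'' and conclude that the instance becomes a \csp{} over a transformed set --- doing so would also transform the right-hand-side equalities into non-equalities. One must first symmetrise (as in Lemma~\ref{lem:GHZ_symmetrise}) so that a single matrix $M$ suffices, and then invoke machinery of the form of Theorem~\ref{thm:GHZ-state}, which carefully tracks what happens to the signatures on the other side and imposes $\omega$-normalisation side conditions. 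As written, this step of your plan would fail, and the $W$-type branch is similarly deferred entirely to ``known constructions'' without identifying which ones survive when only $\ket{0}$, $\ket{1}$, and even-arity equalities are freely available.
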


The preceding results all apply to complex-valued signatures, but the following theorem is restricted to real-valued ones.

\begin{thm}[Theorem 5.1, \cite{cai_dichotomy_2017}]
\label{thm:real-valued_Holant-c}
 Let $\cF$ be a set of real-valued signatures. Then $\Holp[c]{\cF}$ is \sP-hard unless $\cF$ is a tractable family for \Hol$^*$ or \csp$_2^c$.
\end{thm}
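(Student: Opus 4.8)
\emph{Proof proposal.} The plan is to treat the tractability and hardness directions separately. Tractability is a matter of assembling the cited dichotomies via two elementary reductions. If $\Holp[*]{\cF}$ is in \FP, then since $\{\ket 0,\ket 1\}$ is contained in the set $\cU$ of all unary signatures, every \Hol$^c$ signature grid over $\cF$ is already a \Hol$^*$ signature grid, so $\Holp[c]{\cF}\leq_T\Holp[*]{\cF}$. If instead $\csp_2^c(\cF)$ is in \FP, then inserting a copy of $=_2$ on every edge gives $\Holp[c]{\cF}\equiv_T\Holp{\cF\cup\{\ket 0,\ket 1\}\mid\{=_2\}}$, and since a pinning signature composed through $=_2$ is again a pinning signature, we may move $\ket 0$ and $\ket 1$ to the right to obtain $\Holp[c]{\cF}\equiv_T\Holp{\cF\mid\{\ket 0,\ket 1,=_2\}}$. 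As $\{\ket 0,\ket 1,=_2\}\subseteq\{\ket 0,\ket 1\}\cup\{\ket 0\t{2n}+\ket 1\t{2n}\mid n\in\NN^*\}$, this is at most as hard as $\csp_2^c(\cF)$. Either way $\Holp[c]{\cF}$ is in \FP.

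For hardness, assume $\cF$ is neither \Hol$^*$- nor \csp$_2^c$-tractable, so both $\Holp[*]{\cF}$ and $\csp_2^c(\cF)$ are \sP-hard by the relevant dichotomies; we must derive \sP-hardness of $\Holp[c]{\cF}$. By Theorem~\ref{thm:Holant-star} the first condition gives $\cF\not\subseteq\avg{\cT}$, so $\cF$ has multipartite entanglement, and by the reduction at the end of Section~\ref{s:properties} we may assume $\cF$ contains a genuinely entangled \emph{real} signature $f$ of arity $n\ge 3$. The first step is to realise, over $\cF$ together with $\{\ket 0,\ket 1\}$, a \emph{ternary} genuinely entangled (hence real) signature $g$: this uses gadgets as well as pinning, since pinning alone need not preserve entanglement (for instance it destroys the entanglement of $\ket 0\t n+\ket 1\t n$), and it follows the pattern of the corresponding construction in \cite{backens_new_2017}, adapted to the smaller unary palette available here. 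The signature $g$ is of GHZ type or of $W$ type; in either case the next step is to build from it, again using only gadgets and the two pinning signatures, a ternary genuinely entangled signature that is moreover \emph{symmetric}.

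Once such a symmetric ternary genuinely entangled $s$ is realisable, the set $\{s\}\cup\{\ket 0,\ket 1\}$ falls under Theorem~\ref{thm:symmetric_Holant-c}: combined with the \sP-hardness classifications for symmetric signatures on $3$-regular graphs from \cite{cai_holant_2012}, it follows that $\Holp[c]{\{s\}}$ — and hence $\Holp[c]{\cF}$ — is \sP-hard \emph{unless} $s$ lies in $S\circ\cA$ for some $S\in\cS$, or $\{s\}$ is \Hol$^*$-tractable. So the proof reduces to the residual cases in which the symmetrisation cannot be carried out, or can only produce signatures in one of these tractable symmetric families. These must be closed by pushing the structural constraints back onto the whole set $\cF$: exploiting that every signature of $\cF$ is real, one argues that if no \sP-hardness gadget can be built then, after a common holographic transformation compatible with the pinning signatures, every element of $\cF$ lies in one of $\avg{\cE}$, $\cA$, $T\circ\cA$, $\cL$ (so that $\cF$ is \csp$_2^c$-tractable) or in one of the \Hol$^*$-tractable families — contradicting the standing assumption.

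I expect this final convergence argument to be the main obstacle. Real-valuedness is used decisively here: it restricts which matrices $S\in\cS$ and which local equivalences of GHZ- and $W$-type ternary signatures can occur, and it rules out the genuinely complex obstructions that make the unrestricted \Hol$^c$ classification — the subject of this paper — harder. Within it, the most delicate sub-case is the $W$-type one, where $g$ is locally equivalent to $\ket W$ and the reachable symmetric signatures are closely tied to the $\avg{K\circ\cM}$ and $\avg{KX\circ\cM}$ families; handling the interaction between those patterns, the pinning signatures, and the reality constraint is where most of the case analysis lives.
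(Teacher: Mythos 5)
Your tractability direction is fine and matches the standard reductions. The hardness direction, however, has a genuine gap at its very first step: it is \emph{not} always possible to realise a ternary genuinely entangled signature from an $n$-ary one when only $\ket{0}$ and $\ket{1}$ are freely available. Take $\ket{\psi}=\ket{0}\t{4}+\ket{1}\t{4}$ (or any genuinely entangled signature whose support points have pairwise even Hamming distance at least $4$): pinning inputs either kills the entanglement or preserves the two-point support structure, and self-loops reduce arity in steps of two while preserving the generalised-equality form, so the gadgets you can reach bottom out at a $4$-ary generalised equality $a\ket{x_1x_2x_3x_4}+b\ket{\bar x_1\bar x_2\bar x_3\bar x_4}$, never a ternary entangled signature. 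The actual proof (both of Theorem 5.1 in \cite{cai_dichotomy_2017} and of the complex generalisation in Section \ref{s:main_theorem}) therefore does \emph{not} funnel everything through a ternary signature; it runs a case analysis on $D_0=\min\{d(x,y)\mid x\neq y,\ \braket{x}{\psi}\braket{y}{\psi}\neq 0\}$ (and nested distances $D_1,D_2,D_3$). Only the branches with an odd distance or small distance lead to a ternary entangled signature and the symmetrisation machinery; the branch $D_0\geq 4$ even (and one $D_1=2$ sub-branch, via Lemma \ref{lem:interpolate_equality4}) instead produces a $4$-ary (generalised) equality and reduces the whole problem to $\csp_2(\cF)$ by Lemma \ref{lem:generalised_equality4}, with hardness then inherited from the $\csp_2^c$ dichotomy. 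This branch is precisely why $\cL$ and $T\circ\cA$ appear among the tractable families in the statement: your route, which only ever terminates in the ternary/symmetric analysis, could never account for them, and your ``final convergence argument'' cannot repair this because the obstruction is structural, not a matter of closing residual cases.

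A secondary weakness: you appeal to the \Hol$^+$ symmetrisation constructions ``adapted to the smaller unary palette,'' but those constructions (and Theorem \ref{thm:popescu_rohrlich} underlying them) rely on $\ket{+}$ and $\ket{-}$, which are not free in \Hol$^c$. The adaptation is not routine -- it genuinely fails when the ternary signature lies in $K\circ\cM$ or $KX\circ\cM$, which is why the paper needs the separate argument of Lemma \ref{lem:case_KM} (building unary signatures by self-loops and chains of binary gadgets before the entanglement argument can run). That said, the decisive defect is the missing generalised-equality/$\csp_2^c$ branch described above.
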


\subsection{Complexity results for ternary signatures}

In addition to the above-mentioned Holant dichotomies, there are also some dichotomies specific to symmetric signatures on three-regular graphs.
For signature sets containing a ternary GHZ-type signature, there is furthermore a direct relationship to \csp{}, which allows a more general complexity classification.
When deriving the \Hol$^c$ dichotomy, our general approach will be to attempt to construct a gadget for an entangled ternary signature and then use the following results.

\begin{thm}[Theorem 3.4, \cite{cai_holant_2012}]\label{thm:W-state}
 $\Holant([y_0,y_1,y_2]|[x_0,x_1,x_2,x_3])$ is \sP-hard unless $[x_0,x_1,x_2,x_3]$ and $[y_0,y_1,y_2]$ satisfy one of the following conditions, in which case the problem is in \FP:
 \begin{itemize}
  \item $[x_0,x_1,x_2,x_3]$ is degenerate, or
  \item there is a 2 by 2 matrix $M$ such that:
   \begin{itemize}
    \item $[x_0,x_1,x_2,x_3]=M\circ[1,0,0,1]$ and $(M^T)^{-1}\circ[y_0,y_1,y_2]$ is in $\mathcal{A}\cup\avg{\mathcal{E}}$,
    \item $[x_0,x_1,x_2,x_3]=M\circ[1,1,0,0]$ and $(M^T)^{-1}\circ[y_0,y_1,y_2]$ is of the form $[0,*,*]$,
    \item $[x_0,x_1,x_2,x_3]=M\circ[0,0,1,1]$ and $(M^T)^{-1}\circ[y_0,y_1,y_2]$ is of the form $[*,*,0]$,
   \end{itemize}
 \end{itemize}
 with $*$ denoting an arbitrary complex number.
\end{thm}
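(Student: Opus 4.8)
The plan is to split according to the entanglement type of $[x_0,x_1,x_2,x_3]$, normalise each non-degenerate case by a holographic transformation (which carries $[y_0,y_1,y_2]$ to $(M^T)^{-1}\circ[y_0,y_1,y_2]$), and then reduce to, or from, standard problems: counting matchings and counting weighted graph homomorphisms. First suppose $[x_0,x_1,x_2,x_3]$ is degenerate. Being symmetric, it is then a scalar multiple of $(\alpha\ket0+\beta\ket1)\t{3}$, so every degree-$3$ vertex merely broadcasts the unary signature $\alpha\ket0+\beta\ket1$ along its three incident edges; each degree-$2$ vertex thus receives two copies of it, which contract with $[y_0,y_1,y_2]$ into a single scalar $s$, and the whole Holant equals $s^{|V|}$ for $V$ the set of degree-$2$ vertices. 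This is computable in polynomial time, so here the problem is in \FP{} regardless of $[y_0,y_1,y_2]$.

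From now on assume $[x_0,x_1,x_2,x_3]$ is non-degenerate, hence genuinely entangled of arity $3$, hence of GHZ type or $W$ type. A symmetric GHZ-type signature can be carried to $[1,0,0,1]=\ket{=_3}$, and a symmetric $W$-type signature to $[1,1,0,0]$ or (equivalently, via $X\t{3}$) to $[0,0,1,1]$, by a transformation of the form $M\t{3}$. Applying the corresponding holographic transformation and using Valiant's Holant Theorem reduces $\Holant([y_0,y_1,y_2]\mid[x_0,x_1,x_2,x_3])$ to $\Holant(g\mid[1,0,0,1])$, $\Holant(g\mid[1,1,0,0])$, or $\Holant(g\mid[0,0,1,1])$, where $g=(M^T)^{-1}\circ[y_0,y_1,y_2]$, so it suffices to classify these three families in terms of $g$.

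For the $W$-type forms it is enough to treat $[1,1,0,0]$, since $X\t{3}$ interchanges $[1,1,0,0]$ and $[0,0,1,1]$ while sending $[g_0,g_1,g_2]$ to $[g_2,g_1,g_0]$. In any bipartite instance the degree-$2$ side $V_L$ and degree-$3$ side $V_R$ satisfy $2|V_L|=3|V_R|$, hence $|V_R|<|V_L|$. If $g$ has the form $[0,*,*]$, then in any nonzero term every left vertex carries Hamming weight at least $1$ while every right vertex carries weight at most $1$, so counting the $1$-edges from the two sides gives $|V_L|\le|V_R|$, a contradiction: the Holant vanishes identically and the problem is trivially tractable. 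If instead $g$ is not of this form, its weight-$0$ component is nonzero; I would then realise $=_2$ over $\{g,[1,1,0,0]\}$ by a small gadget, possibly aided by polynomial interpolation, and observe that $\Holant(\{=_2\}\mid[1,1,0,0])$ contracts to counting all matchings of a $3$-regular multigraph, which is \sP-hard; the hardness then transfers to every such $g$.

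For the GHZ-type form $[1,0,0,1]$, contracting the degree-$2$ vertices into edges identifies $\Holant(g\mid[1,0,0,1])$ (with $g=[y_0,y_1,y_2]$) with counting $M_g$-weighted homomorphisms from cubic multigraphs into the two-vertex target whose edge-weight matrix is $M_g=\left(\begin{smallmatrix}y_0&y_1\\y_1&y_2\end{smallmatrix}\right)$. I would then invoke the dichotomy for complex-weighted graph-homomorphism counting on cubic graphs: the problem is \sP-hard unless $M_g$ has rank at most $1$, or $g\in\cA$, or $g\in\avg{\cE}$ (note that $\avg{\cE}$ already contains every degenerate binary signature, since every unary signature lies in $\cE$), and in those cases the count factorises or is affine, hence polynomial-time computable. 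The main obstacle throughout will be the hardness half: in each non-tractable case only the two signatures $[y_0,y_1,y_2]$ and $[x_0,x_1,x_2,x_3]$ are available --- no pinning signatures are free --- so every reduction must be engineered from gadgets over these two signatures together with interpolation, and one has to check that this succeeds for every $g$ outside the listed tractable forms, including the degenerate sub-cases of the reduction gadgets. A secondary point needing care is the exhaustiveness of the normalisation step, i.e.\ that every non-degenerate symmetric ternary signature is $M\t{3}$-equivalent to exactly one of $[1,0,0,1]$, $[1,1,0,0]$, $[0,0,1,1]$.
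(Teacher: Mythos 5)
This statement is Theorem 3.4 of Cai, Lu and Xia \cite{cai_holant_2012}; the present paper imports it without proof, so there is no in-paper argument to compare against and your proposal must stand on its own. Your architecture is the standard one --- split on degenerate / GHZ-type / $W$-type, normalise by a holographic transformation, and treat $[1,1,0,0]$ and $[0,0,1,1]$ as mirror images under $X\t{3}$ --- and two of your observations are correct and genuinely carry the tractable direction: the edge-counting argument showing $\Holant([0,*,*]\mid[1,1,0,0])$ vanishes identically, and the identification of $\Holant(g\mid[1,0,0,1])$ with the $2$-spin partition function of $M_g=\left(\begin{smallmatrix}y_0&y_1\\y_1&y_2\end{smallmatrix}\right)$ on cubic multigraphs.

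The gaps are in the hardness halves, which are the entire substance of the theorem. For the GHZ branch you propose to ``invoke the dichotomy for complex-weighted graph-homomorphism counting on cubic graphs'' with tractable cases rank $\le 1$, $\cA$, $\avg{\cE}$ --- but no such degree-restricted dichotomy is available off the shelf: the general-graph complex dichotomy does not transfer to cubic instances for free, and bridging that gap (simulating $=_k$ for all $k$ from $=_3$ and $g$, with the $\omega$-normalisation caveats) is precisely the content of Theorem 4.1 of the same source, quoted here as Theorem \ref{thm:GHZ-state}. You are in effect assuming the statement you need to prove. For the $W$ branch, ``realise $=_2$ by a small gadget, possibly aided by interpolation'' is the whole hardness proof and is unjustified; worse, it cannot succeed uniformly: if $g$ is degenerate with $g_0\neq 0$ (e.g.\ $g=[1,0,0]$ or $g=[1,1,1]$), then $\Holant(g\mid[1,1,0,0])$ factorises over the degree-$3$ vertices by the very argument you use for degenerate $[x_0,x_1,x_2,x_3]$ and is trivially in \FP, so no reduction from counting matchings can exist there. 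Your case analysis must therefore separate out degenerate $g$ (note that in the GHZ branch this case is silently absorbed by $\avg{\cE}$, but in the $W$ branches a degenerate $g$ with $g_0\neq0$ is not of the form $[0,*,*]$, so the statement has to be read with that tractable case understood). Until the gadget and interpolation constructions for non-degenerate $g$ with $g_0\neq0$ are actually exhibited, the proposal is a plan rather than a proof.
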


The signature $\ket{000}+\ket{111}$ is invariant under holographic transformations of the form $\left(\begin{smallmatrix}1&0\\0&\omega\end{smallmatrix}\right)$, where $\omega^3=1$.
Therefore, a binary signature is considered to be $\omega$-normalised if $y_0=0$, or there does not exist a primitive $(3t)$-th root of unity $\lambda$, where $gcd(t,3)=1$, such that $y_2=\lambda y_0$.
Similarly, a unary signature $[a,b]$ is $\omega$-normalised if $a=0$, or there does not exist a primitive $(3t)$-th root of unity $\lambda$, where $gcd(t,3)=1$, such that $b=\lambda a$.

\begin{thm}[Theorem 4.1, \cite{cai_holant_2012}]
 \label{thm:GHZ-state}
 Let $\mathcal{G}_1,\mathcal{G}_2$ be two sets of signatures and let $[y_0,y_1,y_2]$ be a $\omega$-normalised and non-degenerate signature.
 In the case of $y_0=y_2=0$, further assume that $\mathcal{G}_1$ contains a unary signature $[a,b]$ which is $\omega$-normalised and satisfies $ab\neq 0$.
 Then:
 \begin{equation}
  \Holp{\{[y_0,y_1,y_2]\}\cup\mathcal{G}_1 \mid \{[1,0,0,1]\}\cup\mathcal{G}_2} \equiv_T \csp(\{[y_0,y_1,y_2]\}\cup\mathcal{G}_1\cup\mathcal{G}_2).
 \end{equation}
 More specifically, $\Holp{\{[y_0,y_1,y_2]\}\cup\mathcal{G}_1 \mid \{[1,0,0,1]\}\cup\mathcal{G}_2}$ is \sP-hard unless:
 \begin{itemize}
  \item $\{[y_0,y_1,y_2]\}\cup\mathcal{G}_1\cup\mathcal{G}_2\subseteq\avg{\cE}$, or
  \item $\{[y_0,y_1,y_2]\}\cup\mathcal{G}_1\cup\mathcal{G}_2\subseteq\mathcal{A}$,
 \end{itemize}
 in which cases the problem is in \FP.
\end{thm}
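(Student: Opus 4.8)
The plan is to prove the claimed interreducibility first, and then read off the ``more specifically'' dichotomy from the known dichotomy for complex-weighted Boolean \csp. Write $\mathcal{H}=\{[y_0,y_1,y_2]\}\cup\mathcal{G}_1\cup\mathcal{G}_2$, so that $\csp(\mathcal{H})=\Holp{\mathcal{H}\mid\{=_n\mid n\in\NN^*\}}$; and note that if $[y_0,y_1,y_2]$ is not already $\omega$-normalised one may first apply the free holographic transformation by $\diag(1,\omega)$ with $\omega^3=1$, which fixes $[1,0,0,1]$, to make it so. The reduction $\Holp{\{[y_0,y_1,y_2]\}\cup\mathcal{G}_1\mid\{[1,0,0,1]\}\cup\mathcal{G}_2}\leq_T\csp(\mathcal{H})$ is the easy direction: in $\csp(\mathcal{H})$ every equality signature is available on the right, in particular $=_3=[1,0,0,1]$, and attaching a fresh $=_2$ to each edge of a constraint turns that constraint into a right-hand-side gadget, so every signature occurring on either side of the bipartite problem can be simulated.

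For the converse reduction I would first argue that it suffices to realise the binary equality $=_2$ as a left-hand-side gadget over $\{[y_0,y_1,y_2]\}\cup\mathcal{G}_1$. Indeed, a left-hand-side $=_2$ placed between two copies of $[1,0,0,1]$ yields $=_4$, and iterating gives every $=_n$ with $n\geq3$; capping one leg of $[1,0,0,1]$ by a left-hand-side $=_2$ gives $=_1$, and capping two legs of $=_4$ gives a right-hand-side $=_2$; and the same left-hand-side $=_2$ lets any signature of $\mathcal{G}_2$ be transported to the left. So all of $\csp(\mathcal{H})$ is recovered once the left-hand-side $=_2$ is in hand. That signature is produced by polynomial interpolation using a recursive gadget built from copies of $[1,0,0,1]$ and $[y_0,y_1,y_2]$: writing $\mathbf{Y}=\left(\begin{smallmatrix}y_0&y_1\\y_1&y_2\end{smallmatrix}\right)$ for the matrix of $[y_0,y_1,y_2]$ (of rank $2$, by non-degeneracy), the $s$-th iterate has an effective signature depending on $s$ only through a power of a fixed transfer operator, and because the $[1,0,0,1]$ nodes force the relevant edges to carry equal values this operator reduces, on the subspace that matters, to a multiple of $\diag(y_0,y_2)$, with eigenvalue ratio $y_2/y_0$. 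Interpolation goes through as long as $y_2/y_0$ avoids the finite set of exceptional roots of unity created by the order-$3$ stabiliser $\{\diag(1,\omega)\mid\omega^3=1\}$ of $[1,0,0,1]$, and that exceptional set is exactly the one forbidden by the hypothesis that $[y_0,y_1,y_2]$ is $\omega$-normalised (the case $y_0=0$ being symmetric). The one remaining case is $y_0=y_2=0$: then $\mathbf{Y}$ is a multiple of $\left(\begin{smallmatrix}0&1\\1&0\end{smallmatrix}\right)$ and the transfer operator degenerates on the relevant subspace, so one splices the $\omega$-normalised unary $[a,b]\in\mathcal{G}_1$ with $ab\neq0$ into the gadget, which restores a usable eigenvalue ratio, and the interpolation proceeds as before.

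Given the equivalence, $\Holp{\{[y_0,y_1,y_2]\}\cup\mathcal{G}_1\mid\{[1,0,0,1]\}\cup\mathcal{G}_2}$ has exactly the complexity of $\csp(\mathcal{H})$, so the dichotomy follows by invoking the dichotomy for complex-weighted Boolean \csp{} \cite{cai_complexity_2014}: the problem is in \FP{} precisely when $\mathcal{H}\subseteq\cA$ or every signature in $\mathcal{H}$ is of product type, and is \sP-hard otherwise. To match the statement it then remains to observe that a signature is of product type exactly when it lies in $\avg{\cE}$, i.e.\ is a tensor product of unary signatures and binary equality/disequality signatures.

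The main obstacle is the interpolation step: choosing the recursive gadget so that its transfer operator is controlled by $\mathbf{Y}$ up to the harmless $\ZZ/3$ symmetry of $[1,0,0,1]$, carrying out the root-of-unity bookkeeping that pins $\omega$-normalisation down as the precise non-degeneracy condition, and handling the $y_0=y_2=0$ branch via the auxiliary unary. A secondary source of friction is the bipartite side-bookkeeping: one must ensure each realised signature ends up available on the side that the subsequent gadget constructions --- the assembly of the $=_n$ and the relocation of $\mathcal{G}_2$ --- actually require it on.
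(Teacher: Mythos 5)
First, a point of scope: this statement is not proved in the present paper at all --- it is quoted verbatim as Theorem 4.1 of \cite{cai_holant_2012} and used as a black box, so there is no in-paper proof to compare against. Measured against the proof in that source, your outline has the right architecture: the easy direction is as you say; the converse hinges on obtaining $=_2$ as a left-hand-side gadget, after which assembling all $=_n$ on the right and relocating $\mathcal{G}_2$ works exactly as you describe; and the ``more specifically'' part is then read off from the \csp{} dichotomy with tractable cases $\cA$ and $\avg{\cE}$ (your identification of ``product type'' with $\avg{\cE}$ is correct).

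The genuine gap is the interpolation step, which is the entire technical content of the theorem and which you assert rather than carry out. Two concrete problems. (i) The claim that the transfer operator of the chain gadget reduces ``on the subspace that matters'' to a multiple of $\diag(y_0,y_2)$ is unsubstantiated and implausible when $y_1\neq 0$: any chain alternating $[1,0,0,1]$ with $[y_0,y_1,y_2]$ has a transfer matrix involving the full non-diagonal matrix $\left(\begin{smallmatrix}y_0&y_1\\y_1&y_2\end{smallmatrix}\right)$, and its eigenvalue ratio is not $y_2/y_0$; you would need to exhibit the gadget and compute its spectrum. (ii) More importantly, $\omega$-normalisation does \emph{not} exclude all root-of-unity ratios --- it only forbids $y_2=\lambda y_0$ for $\lambda$ a primitive $(3t)$-th root of unity with $\gcd(t,3)=1$, so for instance $y_2=y_0$, $y_2=-y_0$, $y_2=iy_0$, and even primitive ninth roots of unity are all $\omega$-normalised. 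Hence ``interpolation goes through as long as $y_2/y_0$ avoids the exceptional set forbidden by $\omega$-normalisation'' cannot be right: Vandermonde-style interpolation needs infinitely many distinct powers of the eigenvalue ratio, and the surviving root-of-unity cases require separate arguments (direct realisation of the needed signature from the finitely many distinct iterates, or different gadgets entirely). That residual case analysis is precisely where the bulk of the proof in \cite{cai_holant_2012} lives, and your proposal does not close it.
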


The following lemmas show how to construct symmetric ternary entangled signatures from non-symmetric ones.

\begin{lem}[Lemma 18, \cite{backens_new_2017}] \label{lem:GHZ_symmetrise}
 Let $\ket{\psi}$ be a ternary GHZ-type signature, i.e.\ $\ket{\psi}=(A\otimes B\otimes C)\ket{\GHZ}$ for some invertible 2 by 2 matrices $A,B,C$. Then at least one of the three possible symmetric triangle gadgets constructed from three copies of $\ket{\psi}$ is non-degenerate, unless $\ket{\psi}\in K\circ\cE$ and is furthermore already symmetric.
\end{lem}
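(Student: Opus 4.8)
The plan is to prove the contrapositive: if all three triangle gadgets are degenerate, then $\ket{\psi}\in K\circ\cE$ and $\ket{\psi}$ is already symmetric. Since $\ket{\psi}$ has GHZ type, fix invertible $2\times 2$ matrices $A,B,C$ with $\ket{\psi}=(A\otimes B\otimes C)\ket{\GHZ}=\sum_{t\in\{0,1\}}(A\ket{t})\otimes(B\ket{t})\otimes(C\ket{t})$. The three symmetric triangle gadgets are those in which every copy leaves the same input dangling; for the one leaving the first input dangling, contract the second and third inputs of the three copies into a triangle (the two orientations give the same signature). Joining the second input of one copy to the third input of another, with GHZ-variables $t_i$ and $t_j$, contributes a factor $(B^TC)_{t_i t_j}$, so with $N:=B^TC$ the gadget is
\[
 \ket{G_1}=(A\otimes A\otimes A)\sum_{t_1,t_2,t_3\in\{0,1\}}N_{t_1t_2}N_{t_2t_3}N_{t_3t_1}\,\ket{t_1t_2t_3}.
\]
The coefficient $N_{t_1t_2}N_{t_2t_3}N_{t_3t_1}$ is visibly constant on bit strings of equal Hamming weight, so the inner tensor is the symmetric ternary signature $[N_{00}^3,\,N_{00}N_{01}N_{10},\,N_{01}N_{10}N_{11},\,N_{11}^3]$, and since $A\t{3}$ commutes with permutations of the tensor factors, $\ket{G_1}$ (and likewise each triangle gadget) is symmetric. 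The other two gadgets have the same form with $N$ replaced by $A^TC$ and by $A^TB$.

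Next I would determine exactly when $\ket{G_1}$ is degenerate. As $A$ is invertible, $\ket{G_1}$ is degenerate iff the symmetric signature $[N_{00}^3,N_{00}N_{01}N_{10},N_{01}N_{10}N_{11},N_{11}^3]$ is, i.e.\ (by the standard criterion) iff the matrix with rows $(N_{00}^3,N_{00}N_{01}N_{10},N_{01}N_{10}N_{11})$ and $(N_{00}N_{01}N_{10},N_{01}N_{10}N_{11},N_{11}^3)$ has rank at most one. Expanding the three $2\times 2$ minors, each factors as a monomial in the $N_{jk}$ times $\det N=\det(B)\det(C)$, which is nonzero; dividing this out, the rank-one condition forces $N_{00}=N_{11}=0$, and conversely if $N$ is anti-diagonal the signature is identically $0$ and hence trivially degenerate. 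Thus $\ket{G_1}$ is degenerate precisely when $B^TC$ is anti-diagonal, and similarly the other two gadgets are degenerate precisely when $A^TC$, respectively $A^TB$, is anti-diagonal. This minor computation is the spot most prone to an indexing slip, so I would carry it out with care, but it is otherwise routine.

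Finally, assume all three gadgets are degenerate, so $A^TB$, $A^TC$ and $B^TC$ are all anti-diagonal. Writing the columns of $A,B,C$ as $\v{a}_0,\v{a}_1$, etc., and using the bilinear form $\v{u}\cdot\v{v}=u_0v_0+u_1v_1$ on $\CC^2$, anti-diagonality says $\v{a}_0\cdot\v{b}_0=\v{a}_0\cdot\v{c}_0=\v{b}_0\cdot\v{c}_0=0$, and likewise for the index-$1$ columns. Since $\v{a}_0\neq 0$, its orthogonal line contains $\v{b}_0$ and $\v{c}_0$, so $\v{b}_0\parallel\v{c}_0$; with $\v{b}_0\cdot\v{c}_0=0$ this makes $\v{b}_0$ isotropic, and an isotropic vector spans its own orthogonal line, so $\v{a}_0,\v{b}_0,\v{c}_0$ are all proportional to one of the two isotropic vectors $\ket{0}\pm i\ket{1}$. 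The index-$1$ columns are similarly proportional to an isotropic vector, necessarily the other one since $\v{a}_0$ and $\v{a}_1$ are independent. Because $\ket{0}+i\ket{1}$ and $\ket{0}-i\ket{1}$ are exactly the columns of $K$, this gives $A=K\,\diag(\lambda_A,\mu_A)$, $B=K\,\diag(\lambda_B,\mu_B)$, $C=K\,\diag(\lambda_C,\mu_C)$ with all entries nonzero (swapping which isotropic vector is taken ``first'' only replaces $K$ by $KX$, which changes nothing below). Hence
\[
 \ket{\psi}=(A\otimes B\otimes C)\ket{\GHZ}=K\t{3}\bigl(\lambda_A\lambda_B\lambda_C\,\ket{000}+\mu_A\mu_B\mu_C\,\ket{111}\bigr)\in K\circ\cE,
\]
and this equals $\lambda_A\lambda_B\lambda_C\,(K\ket{0})\t{3}+\mu_A\mu_B\mu_C\,(K\ket{1})\t{3}$, a sum of two symmetric signatures, so $\ket{\psi}$ is symmetric. (Conversely, for such $\ket{\psi}$ one sees directly that $A^TB$, $A^TC$, $B^TC$ are all anti-diagonal and every triangle gadget is the zero signature, so the exceptional case is genuine.) The main obstacle is the middle step: setting up the triangle gadget correctly and establishing the criterion ``degenerate $\Leftrightarrow$ anti-diagonal''; once that is in hand, the isotropic-vector argument finishes the proof cleanly.
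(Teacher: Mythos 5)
The paper states this lemma only as an import from the cited \Hol$^+$ paper and gives no proof of its own, so there is nothing internal to compare against; judged on its own, your argument is correct and follows the natural route (compute the triangle gadget as $A\t{3}$ applied to the symmetric signature $[N_{00}^3,\,N_{00}N_{01}N_{10},\,N_{01}N_{10}N_{11},\,N_{11}^3]$ with $N\in\{B^TC,A^TC,A^TB\}$, characterise degeneracy, then use the isotropic-vector argument to land on $K$). One small correction to the step you yourself flag as delicate: it is not true that \emph{each} of the three $2\times 2$ minors is a monomial times $\det N$. Writing $p=N_{00},q=N_{01},r=N_{10},s=N_{11}$, the outer minors are $p^2qr\,\det N$ and $qrs^2\,\det N$, but the middle one is $f_0f_3-f_1f_2=ps(ps+qr)\det N$. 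Those two monomial conditions alone only give ``$qr=0$ or $p=s=0$'', so the non-monomial factor $ps(ps+qr)$ is genuinely needed: if $p\neq 0$ then $qr=0$, hence $\det N=ps\neq 0$ forces $s\neq 0$, and then $ps(ps+qr)=(ps)^2\neq 0$, a contradiction. With that one-line repair your conclusion ``degenerate iff $N$ is anti-diagonal'' stands, and the rest of the proof (all three Gram-type products anti-diagonal $\Rightarrow$ the columns of $A,B,C$ lie on the two isotropic lines $\ket{0}\pm i\ket{1}$, i.e.\ $A,B,C\in K\cdot\diag$ up to a common $X$, whence $\ket{\psi}=K\t{3}(\lambda\ket{000}+\mu\ket{111})\in K\circ\cE$ and is symmetric) is clean and complete, including the converse check that the exceptional case really does defeat all three gadgets.
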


\begin{lem}[Lemma 19, \cite{backens_new_2017}] \label{lem:W_symmetrise}
 Let $\ket{\psi}$ be a ternary $W$-type signature, i.e.\ $\ket{\psi}=(A\otimes B\otimes C)\ket{W}$ for some invertible 2 by 2 matrices $A,B,C$. If $\ket{\psi}\in K\circ\cM$ (or $\ket{\psi}\in KX\circ\cM$), assume that we also have a binary entangled signature $\ket{\phi}$ that is not in $K\circ\cM$ (or $KX\circ\cM$, respectively). Then we can construct a symmetric ternary entangled signature.
\end{lem}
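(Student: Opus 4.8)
The plan is to realise the required signature as a symmetric \emph{triangle gadget} built from three copies of $\ket\psi$, splicing in $\ket\phi$ on the internal edges only in the cases where the bare construction degenerates. Recall $\ket W=\ket{001}+\ket{010}+\ket{100}$, and, writing $\ket\psi=(A\otimes B\otimes C)\ket W$, call the three legs of $\ket\psi$ the $A$-leg, $B$-leg and $C$-leg according to which of the three matrices acts on them. Take three copies of $\ket\psi$, use the three $A$-legs as the outputs of the gadget, and connect the $B$-leg of copy $i$ to the $C$-leg of copy $i+1$ around a $3$-cycle. Contracting the three copies of $\ket W$ against the matrix $N:=B^{T}C$ that sits on each internal edge, a direct computation gives a gadget signature of the form $(A\otimes A\otimes A)\ket{\tilde g}$ with
\[
 \ket{\tilde g}=\bigl[\,s^{3}-3ps+3ds,\;\; N_{00}(s^{2}-p+d),\;\; N_{00}^{2}\,s,\;\; N_{00}^{3}\,\bigr],
\]
where $s=N_{01}+N_{10}$, $p=N_{01}N_{10}$ and $d=N_{00}N_{11}$. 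Since $\ket{\tilde g}$ is symmetric and $A$ is invertible, the gadget signature is symmetric; moreover a symmetric ternary signature is genuinely entangled exactly when it is non-degenerate (if it factored across one leg, symmetry would force it to be a tensor product of three unary signatures), so it suffices to control when $\ket{\tilde g}$ is degenerate.

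Matching $\ket{\tilde g}$ against $[\beta^{3},\beta^{2}\gamma,\beta\gamma^{2},\gamma^{3}]$ and using $\det N=\det B\det C\neq 0$, one finds that $\ket{\tilde g}$ is degenerate precisely when $N_{00}=(B\ket 0)^{T}(C\ket 0)=0$, i.e.\ when $B\ket 0$ and $C\ket 0$ are orthogonal under the bilinear form $(x,y)\mapsto x_{0}y_{0}+x_{1}y_{1}$ on $\CC^{2}$. Running the same construction with the $B$-legs, respectively the $C$-legs, as the outputs gives two more symmetric triangle gadgets, degenerate iff $A\ket 0$ is orthogonal to $C\ket 0$, respectively to $B\ket 0$. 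Hence all three symmetric triangle gadgets are degenerate exactly when $A\ket 0$, $B\ket 0$ and $C\ket 0$ are pairwise orthogonal for this form; and three nonzero vectors of $\CC^{2}$ can be pairwise orthogonal for it only if they all lie on one of its two isotropic lines, $\spans\{\ket 0+i\ket 1\}=\spans\{K\ket 0\}$ or $\spans\{\ket 0-i\ket 1\}=\spans\{K\ket 1\}$. Unpacking $\ket\psi=(A\otimes B\otimes C)\ket W$, this last condition says exactly that $K^{-1}\circ\ket\psi$, respectively $(KX)^{-1}\circ\ket\psi$, is supported on Hamming weight at most $1$, i.e.\ that $\ket\psi\in K\circ\cM$, respectively $\ket\psi\in KX\circ\cM$. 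So if $\ket\psi\notin K\circ\cM\cup KX\circ\cM$, at least one of the three symmetric triangle gadgets is a symmetric genuinely entangled ternary signature and we are done.

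In the exceptional case, say $\ket\psi\in K\circ\cM$, one can choose the decomposition $\ket\psi=(A\otimes B\otimes C)\ket W$ so that $A\ket 0$, $B\ket 0$, $C\ket 0$ are all proportional to $\kappa:=K\ket 0$ (this follows from classifying the $W$-type signatures lying in $\cM$). Rebuild the $A$-output triangle, but route every internal edge through a copy of $\ket\phi$; the twist matrix then becomes $B^{T}P_{\phi}C$, where $P_{\phi}$ is the $2\times 2$ matrix of $\ket\phi$. Since $\ket\phi$ is a genuinely entangled binary signature, $P_{\phi}$ is invertible, so $\det(B^{T}P_{\phi}C)\neq0$, and by the criterion above the modified gadget is degenerate iff $(B\ket 0)^{T}P_{\phi}(C\ket 0)=0$, which (as $B\ket 0, C\ket 0\propto\kappa$) is a nonzero multiple of $\kappa^{T}P_{\phi}\kappa$. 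Expanding $\ket\phi$ in the basis $\{K\ket 0,K\ket 1\}$ and using $(K\ket 0)^{T}(K\ket 0)=(K\ket 1)^{T}(K\ket 1)=0$ and $(K\ket 0)^{T}(K\ket 1)=2$, one finds $\kappa^{T}P_{\phi}\kappa$ is four times the coefficient of $(K\ket 1)\otimes(K\ket 1)$ in $\ket\phi$, which is nonzero precisely because $\ket\phi\notin K\circ\cM$. Hence this modified symmetric triangle gadget is non-degenerate, giving the required signature; the case $\ket\psi\in KX\circ\cM$ is identical with the roles of $K\ket 0$ and $K\ket 1$ exchanged.

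The bulk of the effort sits in the second paragraph: establishing the twisted-triangle formula for $\ket{\tilde g}$, carrying out its degeneracy analysis, and -- the real obstacle -- verifying that the \emph{simultaneous} degeneracy of all three triangle gadgets coincides \emph{exactly} with $\ket\psi\in K\circ\cM\cup KX\circ\cM$, with no residual cases. Once that is in place, the other steps are routine.
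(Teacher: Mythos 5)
Your proof is correct, and it follows essentially the same route as the source of this lemma: the statement is imported here from \cite{backens_new_2017} (Lemma 19) rather than reproved, and the original argument is likewise a symmetric triangle gadget built from three copies of $\ket{\psi}$, with the degeneracy analysis reducing to the vanishing of the bilinear pairings of $A\ket{0},B\ket{0},C\ket{0}$ and the binary signature $\ket{\phi}$ spliced onto the internal edges exactly in the $K\circ\cM$ (resp.\ $KX\circ\cM$) case. Your explicit formula for the twisted triangle and the identification of the failure locus with the isotropic lines of the bilinear form check out.
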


\subsection{Results about binary and 4-ary signatures}

Besides the above results about ternary signatures, we will also make use of the following result about realising or interpolating the 4-ary equality signature from a more general 4-ary signature.

\begin{lem}[Lemma 2.38, \cite{cai_holographic_2016}]\label{lem:interpolate_equality4}
 Suppose $\cF$ contains a signature $f$ of arity 4 with:
 \[
  \ket{f} = a\ket{0000}+b\ket{0011}+c\ket{1100}+d\ket{1111},
 \]
 where $M=\left(\begin{smallmatrix}a&b\\c&d\end{smallmatrix}\right)$ has full rank. Then:
 \[
  \textsc{Pl-Holant}(\{=_4\}\cup\cF) \leq_T \textsc{Pl-Holant}(\cF).
 \]
\end{lem}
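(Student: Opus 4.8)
The plan is to construct from several copies of $f$ a planar ``ladder'' gadget whose effective signature is again supported only on $\{0000,0011,1100,1111\}$, but with the matrix $M$ replaced by a power $M^{s}$, and then to extract $=_{4}$ by polynomial interpolation in $s$.

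It is convenient to describe such signatures by a $2\times2$ matrix: a signature $h$ supported only on $0000,0011,1100,1111$ takes the value $N_{yz}$ on the input whose first two bits are both $y$ and whose last two bits are both $z$, and the value $0$ unless both of these pairs are constant. In this reduced picture $f$ corresponds to $M$, while $=_{4}$ corresponds to the identity matrix $I$, since $=_{4}$ is exactly the signature of this shape that in addition forces $y=z$. Now chain $s$ copies of $f$, identifying the last two edges of each copy with the first two edges of the next (inner to inner, outer to outer). Each copy already forces both of its pairs to be constant, so the identifications are consistent, and summing over the internal edges is exactly matrix multiplication in the reduced picture: the gadget's effective signature $h_{s}$ is again of the same shape, with reduced matrix $M^{s}$. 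Given the pair structure of $f$ this ladder is planar and its four dangling edges keep the cyclic order of those of $f$, so $h_{s}$ may be substituted for any occurrence of $=_{4}$ in a planar signature grid without destroying planarity.

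For the interpolation, let $\Omega$ be an instance of $\textsc{Pl-Holant}(\{=_{4}\}\cup\cF)$ in which $=_{4}$ occurs $m$ times. For a $2\times2$ matrix $N$ write $F(N)$ for the Holant of the grid obtained from $\Omega$ by replacing every $=_{4}$ with the reduced-picture signature $h_{N}$, so $\Holant_{\Omega}=F(I)$. Sorting the sum-of-products by how many of the $m$ substituted vertices sit in each of the four reduced states $(y,z)$ gives
\[
 F(N)=\sum_{p+q+r+t=m} C_{pqrt}\, N_{00}^{\,p} N_{01}^{\,q} N_{10}^{\,r} N_{11}^{\,t},
\]
a polynomial of degree $m$ in the entries of $N$ whose coefficients depend only on $\Omega$ and $\cF$. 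Substituting $h_{s}$ for every $=_{4}$ yields a planar grid $\Omega_{s}$ over $\cF$ alone, of size polynomial in $|\Omega|$ for polynomially bounded $s$, with $\Holant_{\Omega_{s}}=F(M^{s})$, a quantity the $\textsc{Pl-Holant}(\cF)$ oracle returns. Since $M$ has full rank, its eigenvalues $\lambda_{1},\lambda_{2}$ are non-zero, so by the spectral decomposition (or the Jordan form when $M$ is not diagonalisable) every entry of $M^{s}$ is a fixed $\CC$-linear combination of $\lambda_{1}^{s}$ and $\lambda_{2}^{s}$ (respectively of $\lambda^{s}$ and $s\lambda^{s}$); hence $F(M^{s})$ lies in the span of the $m+1$ functions $s\mapsto(\lambda_{1}^{a}\lambda_{2}^{b})^{s}$ with $a+b=m$ (respectively $s\mapsto s^{k}\lambda^{ms}$ with $0\le k\le m$). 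Evaluating $\Holant_{\Omega_{s}}$ for $s=1,2,\dots$ up to the number of distinct such basis functions --- polynomially many oracle calls --- gives a non-singular generalised Vandermonde system for the combination coefficients; solving it and evaluating the resulting closed form at $s=0$ yields $F(M^{0})=F(I)=\Holant_{\Omega}$. (If some power $M^{s}$ is already a non-zero scalar multiple of $I$ --- e.g.\ if $M$ is diagonalisable and $\lambda_{1}/\lambda_{2}$ is a root of unity --- then $h_{s}=c\,(=_{4})$ outright and no interpolation is needed.)

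The one delicate step is the last one, and this is exactly where full rank of $M$ is used. Because $\lambda_{1},\lambda_{2}\neq0$, the sequence $s\mapsto\Holant_{\Omega_{s}}$ is a non-degenerate exponential-polynomial sequence, so its value at $s=0$ --- which is the number $\Holant_{\Omega}$ we want --- is pinned down by its values at $s\geq1$. Were $M$ rank-deficient, a zero eigenvalue would contribute $0^{s}=0$ for $s\geq1$ but $0^{0}=1$ at $s=0$, and this extrapolation would fail; this matches the fact that $=_{4}$ genuinely cannot be produced in that case. The only other thing to check is that the ladder is indeed planar with the correct cyclic order of its dangling edges, which is routine from the pair structure of $f$.
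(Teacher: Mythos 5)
The paper does not actually prove this lemma --- it is imported verbatim as Lemma 2.38 of \cite{cai_holographic_2016} --- so there is no in-paper proof to compare against, only the cited source. Your argument is a correct reconstruction of the standard proof given there: the planar chain of $s$ copies of $f$ realising the reduced matrix $M^{s}$, followed by an interpolation whose case split (eigenvalue ratio a root of unity, giving $=_4$ directly; ratio not a root of unity, giving a non-singular Vandermonde system; $M$ a single Jordan block, giving a polynomial-times-exponential system) is exactly where the full-rank hypothesis is used.
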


Here, \textsc{Pl-Holant} refers to the Holant problem for planar graphs.
The lemma can of course also be used in the non-planar setting.

Sometimes we can realise a 4-ary generalised equality, i.e.\ a signature of the form $a\ket{x_1 x_2 x_3 x_4}+b\ket{\bar{x}_1 \bar{x}_2 \bar{x}_3 \bar{x}_4}$ with $a,b\in\CC\setminus\{0\}$ and $x_k\in\{0,1\}$ for $k\in[4]$.
Then the following lemma is helpful.

\begin{lem}[Lemma 5.2, \cite{cai_dichotomy_2017}]\label{lem:generalised_equality4}
 Suppose $\cF$ contains a 4-ary generalised equality, then $\Holp{\cF}\equiv_T\csp_2(\cF)$.
\end{lem}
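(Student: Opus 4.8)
The plan is to prove the two reductions separately; the direction $\Holp{\cF}\leq_T\csp_2(\cF)$ is immediate and does not use the hypothesis. Recalling that $\csp_2(\cF)=\Hol\bigl(\cF\mid\{\ket{0}\t{2n}+\ket{1}\t{2n}\mid n\in\NN^*\}\bigr)$, and that by edge subdivision $\Holp{\cF}\equiv_T\Holp{\cF\mid\{=_2\}}$, the signature $=_2=\ket{0}\t{2}+\ket{1}\t{2}$ is exactly the $n=1$ member of the right-hand signature set of $\csp_2$. Hence every signature grid for $\Holp{\cF\mid\{=_2\}}$ is already a legal instance of $\csp_2(\cF)$, which gives $\Holp{\cF}\leq_T\csp_2(\cF)$.

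For the converse $\csp_2(\cF)\leq_T\Holp{\cF}$ it suffices to realise every even equality $=_{2n}$ ($n\in\NN^*$) by a polynomial-size gadget over $\cF$: a $\csp_2(\cF)$ instance is in particular an ordinary signature grid with signatures from $\cF$ together with such equalities, so each equality vertex may then be replaced by its gadget. I would reduce this to producing $=_4$: chaining $n-1$ copies of $=_4$ along single shared edges yields $=_{2n}$ for $n\geq 2$, and closing a self-loop between two inputs of $=_4$ yields $=_2$; every such gadget has size $O(n)$, so the blow-up is polynomial. To obtain $=_4$ from the given $4$-ary generalised equality $g=a\ket{x_1x_2x_3x_4}+b\ket{\bar x_1\bar x_2\bar x_3\bar x_4}$ with $a,b\neq 0$: connecting the fourth input of one copy of $g$ to the fourth input of a second copy yields the $6$-ary generalised equality $a^2\ket{x_1x_2x_3x_1x_2x_3}+b^2\ket{\bar x_1\bar x_2\bar x_3\bar x_1\bar x_2\bar x_3}$, and closing a loop between coordinates $3$ and $6$ (the two coordinates carrying $x_3$) then gives the $4$-ary signature $a^2\ket{x_1x_2x_1x_2}+b^2\ket{\bar x_1\bar x_2\bar x_1\bar x_2}$, in which inputs $1,3$ always agree and inputs $2,4$ always agree. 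Reordering the inputs so the agreeing pairs are adjacent turns this into $a'\ket{0000}+b'\ket{0011}+c'\ket{1100}+d'\ket{1111}$ whose matrix $\left(\begin{smallmatrix}a'&b'\\c'&d'\end{smallmatrix}\right)$ is diagonal or anti-diagonal with the two non-zero entries $a^2,b^2$, hence of full rank; Lemma~\ref{lem:interpolate_equality4} then interpolates $=_4$, i.e.\ $\Holp{\{=_4\}\cup\cF}\leq_T\Holp{\cF}$. Composing, $\csp_2(\cF)\leq_T\Holp{\{=_4\}\cup\cF}\leq_T\Holp{\cF}$, which with the easy direction gives $\Holp{\cF}\equiv_T\csp_2(\cF)$.

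The only slightly delicate point is coercing the arbitrary $4$-ary generalised equality into the precise shape demanded by Lemma~\ref{lem:interpolate_equality4}, namely $a\ket{0000}+b\ket{0011}+c\ket{1100}+d\ket{1111}$ with the associated $2\times 2$ matrix of full rank: a generalised equality $a\ket{x}+b\ket{\bar x}$ fits that shape directly (after reordering inputs) exactly when the bit pattern $x$ splits into two pairs of equal bits, which fails precisely when $x$ has odd Hamming weight. The doubling gadget above sidesteps this by squaring the coefficients — harmless, since $a^2,b^2$ are still non-zero and full rank is preserved — and manufacturing the required pair structure. Beyond that one only has to note the routine point that interpolation of $=_4$ composes correctly with the subsequent substitution of $=_4$-chains for equality vertices, which is unproblematic because Lemma~\ref{lem:interpolate_equality4} already supplies an honest Turing reduction.
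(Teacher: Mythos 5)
This lemma is imported verbatim from \cite{cai_dichotomy_2017} and the paper gives no proof of its own, so there is nothing internal to compare against; judged on its own terms, your argument is correct and self-contained modulo the other cited result, Lemma~\ref{lem:interpolate_equality4}. The easy direction via $\Holp{\cF}\equiv_T\Holp{\cF\mid\{=_2\}}$ is fine, and the substantive direction works: joining two copies of $a\ket{x_1x_2x_3x_4}+b\ket{\bar x_1\bar x_2\bar x_3\bar x_4}$ at their fourth inputs and contracting the two coordinates carrying $x_3$ really does produce $a^2\ket{x_1x_1x_2x_2}+b^2\ket{\bar x_1\bar x_1\bar x_2\bar x_2}$ after reordering, whose support lies in $\{0000,0011,1100,1111\}$ with a diagonal or anti-diagonal, hence full-rank, coefficient matrix; you correctly identify that this detour is needed exactly because a generalised equality whose pattern has odd Hamming weight does not fit the shape of Lemma~\ref{lem:interpolate_equality4} directly. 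Chaining copies of $=_4$ to obtain every $=_{2n}$ and a self-loop to obtain $=_2$ then gives $\csp_2(\cF)\leq_T\Holp{\cF\cup\{=_4\}}\leq_T\Holp{\cF}$ with polynomial blow-up. Two small points worth making explicit in a written-up version: Lemma~\ref{lem:interpolate_equality4} is phrased for a signature \emph{contained in} the set, so one should formally pass through $\Holp{\cF\cup\{f\}}\leq_T\Holp{\cF}$ for the realised gadget signature $f$ before invoking it (you gesture at this); and the lemma as cited is for \textsc{Pl-Holant}, which the paper notes also applies in the non-planar setting used here.
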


The next result arises from the study of entanglement in quantum information theory; we translate it into the terminology and notation used throughout this paper.

\begin{thm}[\cite{popescu_generic_1992,gachechiladze_completing_2017,backens_new_2017}]
\label{thm:popescu_rohrlich}
 Let $\ket{\psi}$ be an $n$-ary genuinely entangled signature with $n\geq 2$.
 For any choice of inputs $j,k\in[n]$ there exist unary signatures $\ket{\phi_l}\in\{\ket{0},\ket{1},\ket{\pm}\}$ with $l\in [n]\setminus\{j,k\}$ such that $\left( \bigotimes_{l} \bra{\phi_l}_l \right) \ket{\psi}$ is a binary entangled signature.
\end{thm}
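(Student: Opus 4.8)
The plan is to argue by induction on $n$, carrying through the induction a notion slightly weaker than genuine entanglement. Call a signature $\ket{\chi}$ on an input set containing $j$ and $k$ \emph{$(j,k)$-connected} if it admits no tensor factorisation in which $j$ and $k$ lie in different factors. A genuinely entangled signature is $(j,k)$-connected for every pair $j\neq k$, and a binary $(j,k)$-connected signature is precisely an entangled binary signature; so it suffices to show that any $(j,k)$-connected signature on $n\geq 2$ inputs can be turned into a $(j,k)$-connected binary one by connecting signatures from $\{\ket{0},\ket{1},\ket{\pm}\}$ to the inputs other than $j,k$. For $n=2$ there is nothing to do. For $n\geq 3$ it is enough to exhibit a single input $m\notin\{j,k\}$ and a signature $\ket{\phi_m}\in\{\ket{0},\ket{1},\ket{\pm}\}$ with $\bra{\phi_m}_m\ket{\psi}$ still $(j,k)$-connected, and then recurse; note the inductive step only ever uses $(j,k)$-connectedness, not genuine entanglement, so the recursion is legitimate.

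First I would dispose of the easy half of this reduction step. Factor input $m$ out, writing $\ket{\psi}=\ket{0}_m\otimes\ket{A}+\ket{1}_m\otimes\ket{B}$, so that connecting $\ket{0},\ket{1},\ket{+},\ket{-}$ to input $m$ realises $\ket{A},\ket{B},\ket{A}+\ket{B},\ket{A}-\ket{B}$ respectively. If $\ket{A},\ket{B}$ are linearly dependent (including the case one is $0$), then input $m$ already factors out of $\ket{\psi}$, whence the nonzero slice inherits $(j,k)$-connectedness and is one of the four test vectors, so we are done. Assuming independence, I claim the pencil $\spans\{\ket{A},\ket{B}\}$ contains a $(j,k)$-connected vector. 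Otherwise, that pencil --- an irreducible projective line --- would be covered by the finitely many closed loci ``separable across $P$'', one for each partition $P$ of the remaining inputs with $j\in P\not\ni k$, hence would lie entirely inside one such locus for a fixed $P_0$. Then $\ket{A}=\ket{a_1}_{P_0}\otimes\ket{a_2}$, $\ket{B}=\ket{b_1}_{P_0}\otimes\ket{b_2}$, and $\ket{A}+\ket{B}$ would have Schmidt rank at most $1$ across the $P_0$-cut, forcing $\ket{a_1}\parallel\ket{b_1}$ or $\ket{a_2}\parallel\ket{b_2}$; in either case $\ket{A}$ and $\ket{B}$ share a tensor factor on one side of the cut, and pulling that factor out of $\ket{\psi}$ gives a factorisation separating $j$ from $k$, contradicting $(j,k)$-connectedness. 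In particular, were arbitrary unary signatures available, picking $\ket{\phi_m}=\alpha\ket{0}+\beta\ket{1}$ with $\ket{A}$-coefficient $\alpha$ and $\ket{B}$-coefficient $\beta$ chosen to hit such a vector would already finish the induction.

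The genuine obstacle is that we may only use the four signatures $\ket{0},\ket{1},\ket{\pm}$, so inside $\spans\{\ket{A},\ket{B}\}$ we are allowed to test only the four points $\ket{A},\ket{B},\ket{A}\pm\ket{B}$, whereas the $(j,k)$-separable locus of the pencil, though a proper (hence finite) subset, can a priori contain on the order of $2^{n-2}$ points; ruling out that all four test points land in it is the heart of the matter. I would handle this along the lines of Popescu and Rohrlich as completed by Gachechiladze and Gühne: measure the non-distinguished inputs one at a time in the $\ket{0}/\ket{1}$ basis, and whenever both outcomes on the current input would leave only $(j,k)$-separable slices --- so that $\ket{A},\ket{B},\ket{A}+\ket{B},\ket{A}-\ket{B}$ are all $(j,k)$-separable --- exploit the resulting rigidity, together with the freedom of which input to eliminate next, either to locate an input on which the $\ket{\pm}$ (i.e.\ $X$-basis) measurement preserves $(j,k)$-connectedness or to derive a contradiction with the $(j,k)$-connectedness of the current signature. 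This bookkeeping is the technical core; since the statement of Theorem~\ref{thm:popescu_rohrlich} is exactly the entanglement-persistence result of \cite{popescu_generic_1992,gachechiladze_completing_2017}, the alternative is to invoke those results directly after the translation into signature language indicated above.
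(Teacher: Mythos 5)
The paper does not prove Theorem~\ref{thm:popescu_rohrlich} at all: it is imported verbatim (modulo translation into signature language) from \cite{popescu_generic_1992,gachechiladze_completing_2017,backens_new_2017}, so your closing fallback --- ``invoke those results directly after the translation'' --- is exactly what the paper does, and on that route your proposal is fine. Your sketched direct proof also has the right skeleton: the invariant you call $(j,k)$-connectedness (equivalently, $j$ and $k$ lying in the same genuinely entangled tensor factor) is precisely the property one must propagate through the one-input-at-a-time reduction, and your pencil argument --- that $\spans\{\ket{A},\ket{B}\}$, being an irreducible projective line, cannot be covered by the finitely many Zariski-closed ``$(j,k)$-separable'' loci unless $\ket{\psi}$ itself factorises across $j$ and $k$ --- is correct and cleanly disposes of the case where arbitrary unary signatures are available.

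However, as a self-contained proof there is a genuine gap, and you have located it yourself: the entire content of the theorem is that the \emph{four specific} signatures $\ket{0},\ket{1},\ket{\pm}$ suffice, i.e.\ that one cannot have all four of $\ket{A},\ket{B},\ket{A}+\ket{B},\ket{A}-\ket{B}$ landing in the separable locus for \emph{every} admissible choice of the input $m$. Your paragraph beginning ``The genuine obstacle'' describes what such an argument would have to do (and correctly notes that this rigidity-plus-freedom-of-choice bookkeeping is the substance of \cite{gachechiladze_completing_2017}, which was written precisely because the original argument of \cite{popescu_generic_1992} had a gap at this point), but it does not carry it out. So the proposal should be read as: a correct reduction of the theorem to its known hard core, plus a citation for that core --- which matches the paper's treatment --- rather than as an independent proof.
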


In other words, given a genuinely entangled $n$-ary signature and the unary signatures $\ket{0},\ket{1},\ket{+}$ and $\ket{-}$, we can realise a binary entangled signature.

\section{The dichotomy}
\label{s:dichotomy}

In \cite{cai_dichotomy_2017}, Cai \etal{} derive a dichotomy for real-valued \Hol$^c$.
Much of their proof -- in particular the dichotomy for \csp$_2^c$, which is crucial to the \Hol$^c$ dichotomy -- applies to complex-valued signatures.
They only switch to considering real-valued signatures towards the end of the paper.
There appear to be two barriers to extending their dichotomy to complex values: firstly, some of the hardness results for not necessarily symmetric ternary entangled signatures in \cite{cai_dichotomy_2017} use techniques that only apply to real values.
Secondly, some cases of the proof of the main theorem rely on being able to interpolate all unary signatures, again using techniques that have only been shown to work for real-valued signatures.

We avoid these barriers by adapting techniques from the dichotomy for \Hol$^+$ \cite{backens_new_2017}: in particular, techniques for realising symmetric ternary entangled signatures from not-necessarily symmetric ones (cf.\ Lemmas \ref{lem:GHZ_symmetrise} and \ref{lem:W_symmetrise}).
Because of this, we never need to interpolate arbitrary unary signatures.
In one subcase, we may require several unary signatures other than $\ket{0}$ and $\ket{1}$, but we give a new construction for realising sufficiently many such signatures by gadgets.
We also point out that one of the interpolations used for real-valued signatures in the proof of Theorem \ref{thm:real-valued_Holant-c} can straightforwardly be extended to complex-valued signatures by Lemma \ref{lem:interpolate_equality4}.

The general proof strategy in the dichotomy theorem is to attempt to realise an entangled ternary signature and then a symmetric entangled ternary signature.
In some cases, the symmetric constructions and subsequent hardness proofs require a little extra work in the \Hol$^c$ setting as compared to the \Hol$^+$ setting; we deal with those issues in Section \ref{s:ternary}.

If we cannot construct a ternary entangled signature, either $\cF\subseteq\avg{\cT}$, in which case the problem is tractable by Theorem \ref{thm:Holant-star}, or we can construct a 4-ary entangled signature of a specific form.
Again, this is analogous to the real-valued case in \cite{cai_dichotomy_2017}.
Given this 4-ary signature, we realise or interpolate a 4-ary equality signature by Lemma \ref{lem:interpolate_equality4}, which reduces the problem to $\csp_2^c$ by Lemma \ref{lem:generalised_equality4}.

The main theorem and its proof are given in Section \ref{s:main_theorem}.

\subsection{Hardness proofs involving a ternary entangled signature}
\label{s:ternary}

First, we prove several lemmas that give a complexity classification for \Hol$^c$ problems in the presence of a ternary entangled signature.
The following results supersede Lemmas 5.1, 5.3, and 5.5--5.7 of \cite{cai_dichotomy_2017}.
Whereas the last three of those only apply to real-valued signatures, the new results work for complex values, too.

\begin{lem}\label{lem:arity3_hardness}
 Let $\ket{\psi}\in\cF$ be an entangled ternary signature.
 Then $\Holp[c]{\cF}$ is \sP-hard unless:
 \begin{itemize}
  \item $\Holp[*]{\cF}$ is tractable, or
  \item $\cF\subseteq S\circ\cA$ for some $S\in\cS$, as defined in \eqref{eq:cS_definition}.
 \end{itemize}
 In both of those cases, the problem $\Holp[c]{\cF}$ is tractable too.
\end{lem}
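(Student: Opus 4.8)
The plan is to invoke the classification of ternary entangled signatures recalled in Section~\ref{s:properties}: every genuinely entangled ternary signature is of GHZ type or of $W$ type, and the two cases are treated separately. (If $\ket{\psi}$ is entangled but not genuinely so, it is a tensor product of a unary signature with a binary genuinely entangled one; pinning the unary factor to $\ket{0}$ or $\ket{1}$ realises the binary factor, so $\ket{\psi}\in\avg{\cT}$ and this case contributes nothing beyond the other signatures of $\cF$ -- in particular if $\cF\subseteq\avg{\cT}$ then $\Holp[*]{\cF}$ is tractable. Hence we may assume $\ket{\psi}$ is genuinely entangled.) In both remaining cases the strategy is the same: use the symmetrisation lemmas (Lemma~\ref{lem:GHZ_symmetrise} or Lemma~\ref{lem:W_symmetrise}) to realise over $\cF\cup\{\ket{0},\ket{1}\}$ a \emph{symmetric} ternary entangled signature $g$, move $g$ to a normal form by a holographic transformation, and then apply the ternary complexity classifications (Theorems~\ref{thm:GHZ-state} and~\ref{thm:W-state}), which admit one arbitrary symmetric signature alongside general signature sets; propagating their two tractable conclusions back through the transformation yields exactly the two exceptional families. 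For the converse: $\Holp[*]{\cF}$ tractable gives $\Holp[c]{\cF}$ tractable since \Hol$^*$ makes strictly more signatures available, and if $\cF\subseteq S\circ\cA$ with $S\in\cS$ then, transforming by $S^{-1}$ on one side of the bipartite form $\Holp[c]{\cF}\equiv_T\Holp{\cF\cup\{\ket{0},\ket{1}\}\mid\{=_2\}}$, every signature -- including the transformed pinnings and the transformed copy of $=_2$ -- lands in $\cA$ by the very definition~\eqref{eq:cS_definition} of $\cS$, and affine signatures are tractable.

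\emph{GHZ type.} Write $\ket{\psi}=(A\otimes B\otimes C)\ket{\GHZ}$. By Lemma~\ref{lem:GHZ_symmetrise}, either one of the three symmetric triangle gadgets on copies of $\ket{\psi}$ is a non-degenerate symmetric ternary signature $g$ (necessarily again of GHZ type), or $\ket{\psi}\in K\circ\cE$ is already symmetric and we take $g=\ket{\psi}$. A symmetric GHZ-type signature equals $M\circ[1,0,0,1]$ for some invertible $M$, and $[1,0,0,1]=\ket{=_3}$; a holographic transformation then turns $\Holp[c]{\cF}$ (made bipartite via $=_2$) into a problem $\Holp{\{[y_0,y_1,y_2]\}\cup\mathcal{G}_1\mid\{[1,0,0,1]\}\cup\mathcal{G}_2}$, where $[y_0,y_1,y_2]$ is the transformed $=_2$ (symmetric and non-degenerate since $M$ is invertible) and the transformed pinning signatures, pulled to the $[y]$-side through the transformed $=_2$, supply the non-degenerate $\omega$-normalised unary Theorem~\ref{thm:GHZ-state} requires when $y_0=y_2=0$. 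Theorem~\ref{thm:GHZ-state} then gives \sP-hardness unless the whole transformed signature set lies in $\avg{\cE}$ or in $\cA$. In the $\cA$ case the conditions on the transformed $=_2$ and the transformed pinnings are exactly $M\in\cS$, so $\cF\subseteq M\circ\cA=S\circ\cA$; in the $\avg{\cE}$ case the transformed $=_2$ being a non-degenerate $\cE$-signature forces $M$, up to a diagonal rescaling (which preserves $\cE$), to be a complex orthogonal matrix or $K$, whence $\cF\subseteq\avg{O\circ\cE}$ or $\cF\subseteq\avg{K\circ\cE}$ and $\Holp[*]{\cF}$ is tractable by Theorem~\ref{thm:Holant-star}.

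\emph{$W$ type.} Write $\ket{\psi}=(A\otimes B\otimes C)\ket{W}$. If $\cF\subseteq\avg{K\circ\cM}$ or $\cF\subseteq\avg{KX\circ\cM}$ then $\Holp[*]{\cF}$ is tractable by Theorem~\ref{thm:Holant-star}; otherwise $\cF$ contains a signature outside the relevant family, and the new ingredient is a gadget construction that, using this signature together with $\ket{0}$ and $\ket{1}$, realises the auxiliary binary entangled signature (and the handful of extra unary signatures) needed to apply Lemma~\ref{lem:W_symmetrise}. That lemma yields a symmetric ternary entangled $g$; if $g$ is of GHZ type we finish as above, and if it is of $W$ type we transform it to $\ket{W}$, realise binary symmetric signatures from $\cF$ and the transformed pinnings, and apply Theorem~\ref{thm:W-state}. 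A case analysis -- mirroring over $\CC$ the corresponding step of~\cite{cai_dichotomy_2017} -- then shows the problem is \sP-hard unless the realisable binary signatures are so constrained that $\cF$, up to the holographic transformation, falls into a \Hol$^*$-tractable family of Theorem~\ref{thm:Holant-star} or into $S\circ\cA$ for some $S\in\cS$. The exceptional sub-case of Lemma~\ref{lem:GHZ_symmetrise} where $\ket{\psi}\in K\circ\cE$ is already symmetric, and the analogous residual sub-cases of Lemma~\ref{lem:W_symmetrise}, are dealt with directly using $\ket{\psi}$ (or a small gadget built from it) in place of $g$.

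I expect the main obstacle to be the $W$-type case, and within it two points: (i) supplying the new gadget construction for an auxiliary binary entangled signature outside $K\circ\cM$ (or $KX\circ\cM$) and for the extra unary signatures, since the usual route via Theorem~\ref{thm:popescu_rohrlich} is unavailable in \Hol$^c$, where $\ket{+}$ and $\ket{-}$ are not freely available; and (ii) checking, where Theorem~\ref{thm:W-state} does not immediately give hardness, that the resulting constraint on $\cF$ pins it down to precisely $\avg{\cT}$, $\avg{O\circ\cE}$, $\avg{K\circ\cE}$, $\avg{K\circ\cM}$, $\avg{KX\circ\cM}$, or $S\circ\cA$, with no further tractable family slipping through. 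The GHZ-type case should, by contrast, be largely bookkeeping: carefully propagating the holographic transformation through Theorem~\ref{thm:GHZ-state} and verifying the $\cS$-conditions.
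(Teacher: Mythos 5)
Your overall architecture matches the paper's: reduce to a symmetric ternary entangled signature via Lemmas~\ref{lem:GHZ_symmetrise} and~\ref{lem:W_symmetrise}, handle GHZ type through a holographic transformation and Theorem~\ref{thm:GHZ-state} (your bookkeeping there, including using the transformed pinnings to supply the $\omega$-normalised unary when the transformed $=_2$ becomes $c(\ket{01}+\ket{10})$, is essentially what the paper does), and handle $W$ type via Theorem~\ref{thm:W-state} and the $K\circ\cM$ / $KX\circ\cM$ containments. The converse tractability direction is also argued as in the paper.

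However, there is a genuine gap exactly where you predict one: the mixed case $\ket{\psi}\in K\circ\cM$ but $\cF\not\subseteq K\circ\cM$ (and its $KX$ analogue). You state that ``the new ingredient is a gadget construction'' realising the auxiliary binary entangled signature needed by Lemma~\ref{lem:W_symmetrise}, but you do not supply it, and this construction is the substantive new content of the lemma --- it is why the complex-valued case did not follow from~\cite{cai_dichotomy_2017}. The paper isolates it as a separate lemma and proceeds as follows: writing $\ket{\psi}=K^{\otimes 3}(a\ket{000}+b\ket{001}+c\ket{010}+d\ket{100})$ with $bcd\neq 0$, a self-loop on $\ket{\psi}$ realises $\ket{0}+i\ket{1}$, which after a holographic transformation by $K^{-1}$ yields $\ket{1}$ on the RHS alongside $\ket{+},\ket{-}$ and $\ket{01}+\ket{10}$; chains of binary gadgets obtained by plugging $\ket{\pm}$ into the transformed $\ket{\psi}$ then realise polynomially many distinct unary signatures $\ket{0}+(nz\pm1)\ket{1}$; finally, taking $\ket{\varphi}\in\cF\setminus K\circ\cM$, one runs a perturbed version of the Popescu--Rohrlich argument (Theorem~\ref{thm:popescu_rohrlich}), replacing the unavailable fixed unaries input-by-input with $\ket{0}+\alpha_k\ket{1}$, where each $\alpha_k$ need only avoid the roots of one linear and one quadratic polynomial, so the polynomially many realisable unaries suffice to keep the resulting binary signature simultaneously entangled and outside $\cM$. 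Without this chain of constructions, Lemma~\ref{lem:W_symmetrise} cannot be invoked when $\ket{\psi}\in K\circ\cM$, and your $W$-type case does not close. A secondary, smaller issue: when the only available witness $\ket{\varphi}\notin K\circ\cM$ is binary and the symmetrised ternary signature lands back in $K\circ\cM$, one still needs the argument (adapted from Lemma~21 of~\cite{backens_new_2017}) producing a symmetric binary signature outside $K\circ\cM$ using only two rather than four unaries; your proposal does not mention this subcase.
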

\begin{proof}
 First, suppose $\ket{\psi}$ is symmetric.
 
 If $\ket{\psi}$ is of GHZ type, insert an extra vertex in the middle of each edge and assign it the signature $\ket{00}+\ket{11}$.
 This leaves the Holant invariant, i.e.:
   \begin{align*}
    \Holp[c]{\cF} &\equiv_T \Holp{\cF\cup\{\ket{0},\ket{1}\}} \\
    &\equiv_T \Holp{\cF\cup\{\ket{0},\ket{1}\}\mid\{\ket{00}+\ket{11}\}}
   \end{align*}
   Furthermore, we have:
   \[
    \Holp{\cF\cup\{\ket{0},\ket{1}\}\mid\{\ket{00}+\ket{11}\}} \equiv_T \Holp{\cF\cup\{\ket{0},\ket{1}\}\mid\{\ket{00}+\ket{11},\ket{0},\ket{1}\}}.
   \]
   The $\leq_T$ direction is immediate; for the other direction, note that any occurrence of $\ket{0}$ or $\ket{1}$ on the RHS can be replaced by a gadget consisting of a LHS copy of the unary signature connected to $\ket{00}+\ket{11}$.
   
   Let $M$ be an invertible 2 by 2 complex matrix such that $M\t{3}\ket{\psi}=\ket{\GHZ}$ and $((M^{-1})^T)\t{2}(\ket{00}+\ket{11})$ is $\omega$-normalised.
   Then, if $((M^{-1})^T)\t{2}(\ket{00}+\ket{11})$ is not of the form $c(\ket{01}+\ket{10})$ for some $c\in\CC$, by Theorem \ref{thm:GHZ-state}:
   \begin{equation}\label{eq:Holant-c_csp}
    \Holp[c]{\cF} \equiv_T \csp\left( M\circ\left( \cF \cup \left\{ \ket{0}, \ket{1} \right\} \right) \cup \left(M^{-1}\right)^T \circ\{ \ket{00}+\ket{11}, \ket{0}, \ket{1} \} \right)
   \end{equation}
   Now, $((M^{-1})^T)\t{2}(\ket{00}+\ket{11})=c(\ket{01}+\ket{10})$ implies that $M^{-1}=KD$ or $M^{-1}=KXD$ for some invertible diagonal matrix $D$, as shown in \cite{backens_new_2017}.
   In either case, all components of $M^{-1}$ are non-zero, so $(M^{-1})^T\ket{0}=a\ket{0}+b\ket{1}$ for some $a,b,\in\CC\setminus\{0\}$.
   It is then possible to choose $M$ so that $(M^{-1})^T\ket{0}$ is $\omega$-normalised, satisfying the condition of Theorem \ref{thm:GHZ-state}.
   Hence, again, \eqref{eq:Holant-c_csp} holds.
   In either case, $M$ is determined up to non-zero scalar factor and post-multiplication by $X$, which do not affect any of the subsequent arguments.
   
   Thus, $\Holp[c]{\cF}$ is \sP-hard unless:
   \[
    M\circ\left( \cF \cup \left\{ \ket{0}, \ket{1} \right\} \right) \cup \left(M^{-1}\right)^T \circ\{\ket{00}+\ket{11}, \ket{0}, \ket{1}\}
   \]
   is a subset of either $\avg{\cE}$ or $\cA$.
   All entangled binary signatures in $\avg{\cE}$ are of the form $\alpha\ket{00}+\beta\ket{11}$ or $\alpha\ket{01}+\beta\ket{10}$ for some $\alpha,\beta\in\CC\setminus\{0\}$.
   Hence the signatures can be in $\avg{\cE}$ only if $M=DO$ for some diagonal matrix $D$ and orthogonal matrix $O$, or $M=DK^T$ or $M=DXK^T$.
   In those cases, $\cF\subseteq\avg{O\circ\cE}$ or $\cF\subseteq\avg{K\circ\cE}$, so we have tractability by the \Hol$^*$ dichotomy.
  
   For the signatures to be in $\cA$, we require in particular:
   \begin{equation}\label{eq:condition_cA}
    ((M^{-1})^T)\t{2}(\ket{00}+\ket{11}),\;(M^{-1})^T\ket{0},\; (M^{-1})^T\ket{1}\in\cA.
   \end{equation}
   That is just the definition of $M^{-1}\in\cS$, cf.\ \eqref{eq:cS_definition}.
   We also require $M\ket{0}, M\ket{1}\in\cA$.
   Now if $M=\left(\begin{smallmatrix}a&b\\c&d\end{smallmatrix}\right)$ with $a,b,c,d\in\CC$ and $ad-bc\neq 0$, then $M\ket{0}=a\ket{0}+c\ket{1}$ and $(M^{-1})^T\ket{1}=\frac{1}{ad-bc}(-c\ket{0}+a\ket{1})$, so $M\ket{0}$ and $(M^{-1})^T\ket{1}$ are orthogonal to each other (under the inner product that does not involve complex conjugation).
   The same holds for $M\ket{1}$ and $(M^{-1})^T\ket{0}$.
   It is straightforward to see that if $\ket{\phi}\in\cA$ is unary and $(\ket{\phi^\perp})^T\ket{\phi}=0$, then $\ket{\phi^\perp}$ is also affine.
   So $M^{-1}\in\cS$ already implies $M\ket{0}, M\ket{1}\in\cA$.
   The remaining condition is $M\circ\cF\subseteq\cA$; since $M$ is invertible, this is equivalent to $\cF\subseteq M^{-1}\circ\cA$.
   
   To conclude, in the GHZ case the problem is tractable if $\cF\subseteq\avg{O\circ\cE}$ for some orthogonal 2 by 2 matrix $O$, if $\cF\subseteq\avg{K\circ\cE}$, or if there exists $S\in\cS$ such that $\cF\subseteq S\circ\cA$.
   In all other cases, the problem is \sP-hard by reduction from \csp.

 If $\ket{\psi}$ is of $W$ type, then:
   \begin{itemize}
    \item If $\ket{\psi}\notin K\circ\cM\cup KX\circ\cM$, $\Holp{\ket{\psi}}$ is \sP-hard by Theorem \ref{thm:W-state}.
    \item If $\cF\subseteq K\circ\cM$ or $\cF\subseteq KX\circ\cM$, the problem is tractable by the \Hol$^*$ dichotomy.
    \item If $\ket{\psi}\in K\circ\cM$ but $\cF\not\subseteq K\circ\cM$, the problem is \sP-hard by Lemma \ref{lem:case_KM}, and analogously with $KX$ instead of $K$.
   \end{itemize}
 
 Now assume $\ket{\psi}$ is not symmetric.
 If $\ket{\psi}\notin K\circ\cM\cup KX\circ\cM$, we can construct a symmetric ternary signature by Lemmas \ref{lem:GHZ_symmetrise} and \ref{lem:W_symmetrise} and then proceed as above.
 If $\cF\subseteq K\circ\cM$ or $\cF\subseteq KX\circ\cM$, the problem is tractable by the \Hol$^*$ dichotomy.
 
 Finally, if $\ket{\psi}\in K\circ\cM$ but $\cF\not\subseteq K\circ\cM$, or $\ket{\psi}\in KX\circ\cM$ but $\cF\not\subseteq KX\circ\cM$, use Lemma \ref{lem:case_KM}.
 
 This covers all cases.
\end{proof}

\begin{lem}\label{lem:case_KM}
 Let $\ket{\psi}\in\cF\cap K\circ\cM$ be a ternary entangled signature, and assume $\cF\not\subseteq K\circ\cM$.
 Then $\Holp[c]{\cF}$ is \sP-hard.
 The same holds if $\ket{\psi}\in\cF\cap KX\circ\cM$ and $\cF\not\subseteq KX\circ\cM$.
\end{lem}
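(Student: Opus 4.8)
I describe the argument for $\ket{\psi}\in\cF\cap K\circ\cM$; the $KX\circ\cM$ case is entirely analogous, obtained by relabelling $0\leftrightarrow 1$ on every input, which sends one \Hol$^c$ instance to another (the pinning signatures $\ket{0}$ and $\ket{1}$ are merely swapped) and interchanges the roles of $K\circ\cM$ and $KX\circ\cM$, up to a diagonal transformation that $\cM$ absorbs. Two preliminary normalisations. First, a genuinely entangled ternary signature lying in $\cM$ has vanishing GHZ polynomial (its support avoids the weight-$3$ string), hence is of $W$ type, and $W$ type is preserved by local invertible transformations; since $\ket{\psi}\in K\circ\cM$, this forces $\ket{\psi}=K\t{3}\ket{w}$ with $\ket{w}\in\cM$ genuinely entangled, so $\ket{\psi}$ is of $W$ type. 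Second, the hypothesis is that $\cF$ is not contained in the tensor closure of $K\circ\cM$ (if it were, then $\Holp[*]{\cF}$, hence $\Holp[c]{\cF}$, would be tractable by Theorem~\ref{thm:Holant-star}), so fix $\ket{\chi}\in\cF$ with $\ket{\chi}\notin\avg{K\circ\cM}$. Decomposing $\ket{\chi}$ into genuinely entangled tensor factors and noting that every unary signature lies in $K\circ\cM$, one of these factors is genuinely entangled of arity at least $2$ and outside $\avg{K\circ\cM}$, and it is realised from $\ket{\chi}$ by pinning the remaining factors to $\ket{0}$ or $\ket{1}$ as in Section~\ref{s:properties}. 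So we may assume $\ket{\chi}$ itself is genuinely entangled, of arity at least $2$, and not in $\avg{K\circ\cM}$.

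The crucial step is to realise a \emph{binary} genuinely entangled signature $\ket{\phi}$ with $\ket{\phi}\notin K\circ\cM$. If $\ket{\chi}$ is binary we are done, since a binary signature outside $\avg{K\circ\cM}$ is automatically genuinely entangled (a product of two unaries lies in $\avg{K\circ\cM}$). If $\ket{\chi}$ has arity at least $3$, pass to the frame given by the holographic transformation with $K^{-1}$: there $K\circ\cM$ becomes $\cM$, $\avg{K\circ\cM}$ becomes $\avg{\cM}$, $\ket{\psi}$ becomes the $\cM$-signature $\ket{w}$, and pinning an input against $\ket{0}$ or $\ket{1}$ becomes contracting it with the covector $(1,1)$ or $(1,-1)$. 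One then shows that any signature of arity at least $3$ whose image lies outside $\avg{\cM}$ admits a single such contraction whose result is still outside $\avg{\cM}$; iterating down to arity $2$ and transforming back produces the desired $\ket{\phi}$. Making this pinning statement precise -- using the explicit description of $\avg{\cM}$ (its members are supported on strings having at most one $1$ within each block of a partition of the inputs) and, in the cases where the direct argument stalls, the extra structure provided by $\ket{w}$ -- is the technical heart of the proof.

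With $\ket{\psi}$ ($W$-type, in $K\circ\cM$) and $\ket{\phi}$ (binary, genuinely entangled, not in $K\circ\cM$) in hand, Lemma~\ref{lem:W_symmetrise} yields a symmetric ternary genuinely entangled signature $\ket{\theta}$. If $\ket{\theta}$ is of GHZ type, the GHZ analysis in the proof of Lemma~\ref{lem:arity3_hardness}, applied with $\ket{\theta}$ in place of $\ket{\psi}$, shows that $\Holp[c]{\cF\cup\{\ket{\theta}\}}$, hence $\Holp[c]{\cF}$, is \sP-hard unless $\cF\cup\{\ket{\theta}\}$ is contained in $\avg{O\circ\cE}$ for a complex orthogonal $O$, in $\avg{K\circ\cE}$, or in $S\circ\cA$ for some $S\in\cS$; but each of those sets contains, among genuinely entangled ternary signatures, only GHZ-type ones ($\cE$ contributes only GHZ-type genuinely entangled ternary signatures and GHZ type is preserved by local transformations, while every genuinely entangled affine ternary signature is of GHZ type), whereas $\ket{\psi}$ is of $W$ type, so all three possibilities are excluded. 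If $\ket{\theta}$ is of $W$ type, pair it with a non-degenerate binary symmetric signature built from $\ket{\phi}$ -- for instance $\Phi\Phi^{T}$, where $\Phi$ is the $2\times 2$ matrix of $\ket{\phi}$; this is a symmetric signature, non-degenerate since $\det\Phi\neq 0$, and, by choosing which input of $\ket{\phi}$ is used (and, if necessary, first pinning $\ket{\phi}$ with $\ket{0}$ or $\ket{1}$), it can be kept outside $K\circ\cM$ using $\ket{\phi}\notin K\circ\cM$ -- and apply Theorem~\ref{thm:W-state}. The problem is \sP-hard unless $\ket{\theta}$ and the chosen binary signature fall into one of the listed tractable configurations; each such configuration requires $\ket{\theta}\in K\circ\cM\cup KX\circ\cM$ together with a constraint placing the accompanying binary signature effectively inside $K\circ\cM$, which our choice violates. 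In every case $\Holp[c]{\cF}$ is \sP-hard.

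The two places where genuine work is needed are the pinning statement of the second paragraph -- above all, excluding the possibility that every pinning of an arity-$\geq 3$ signature outside $\avg{K\circ\cM}$ against $\ket{0}$ and $\ket{1}$ lands back in $\avg{K\circ\cM}$ -- and, in the $W$-type branch, the verification that a suitably chosen auxiliary binary signature avoids all the exceptional configurations of Theorem~\ref{thm:W-state}. Both are finite case analyses over the explicit descriptions of $\cM$, $K\circ\cM$, $\cA$ and $\avg{\cE}$, and introduce no techniques beyond those already used in Lemma~\ref{lem:arity3_hardness} and in \cite{backens_new_2017}.
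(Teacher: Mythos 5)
Your overall architecture matches the paper's: pick a genuinely entangled witness $\ket{\varphi}\in\cF\setminus K\circ\cM$ of arity at least $2$, reduce to a \emph{binary} entangled signature outside $K\circ\cM$, symmetrise with Lemma~\ref{lem:W_symmetrise}, and finish with Theorems~\ref{thm:GHZ-state} and~\ref{thm:W-state}; the binary-witness case and the endgame are essentially the paper's. But the step you yourself flag as ``the technical heart'' --- reducing a witness of arity $n\geq 3$ to a binary one --- is precisely where the new content of this lemma lives, and the claim you propose to establish there is not the one the paper proves and is doubtful as stated. You assert that any signature of arity at least $3$ lying outside $\avg{\cM}$ (in the $K^{-1}$-transformed frame) admits a single contraction with one of the two available covectors $(1,1)$ or $(1,-1)$ whose result is still outside $\avg{\cM}$, and that iterating this produces the desired binary signature. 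Even granting the single-step claim, the iteration must simultaneously preserve two properties of the final binary signature --- entanglement and non-membership in $\cM$ --- and these are governed by different polynomials in the contraction parameters; with only two admissible values per input one cannot in general avoid the zero sets of both.

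The paper's resolution is a construction your proposal omits entirely: it first uses the ternary signature $\ket{\psi}$ itself to manufacture \emph{more} unary signatures. A self-loop on $\ket{\psi}=K\t{3}(a\ket{000}+b\ket{001}+c\ket{010}+d\ket{100})$ yields $\ket{0}+i\ket{1}$, hence $\ket{1}$ on the RHS after the $K^{-1}$ transformation; plugging $\ket{\pm}$ into $\ket{\psi'}$ gives binary gadgets $(a\pm b)\ket{00}+c\ket{01}+d\ket{10}$, which are symmetrised and chained into RHS unaries $\ket{0}+(nz\pm1)\ket{1}$ for polynomially many distinct values of $n$. Only with this supply does the reduction from $\ket{\varphi'}$ go through, via Theorem~\ref{thm:popescu_rohrlich} and a per-input perturbation argument in which each parameter $\alpha_k$ must avoid the roots of one linear and one quadratic polynomial --- up to three forbidden values, which is exactly why the two fixed covectors at your disposal do not suffice and why your pinning claim cannot be settled by ``a finite case analysis over the explicit description of $\avg{\cM}$''. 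To repair the proof you need either to supply this unary-realisation gadget (the genuinely new construction of the lemma) or to actually prove your single-contraction claim, and the latter is where I expect the argument to break down.
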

\begin{proof}
 We consider the case $\ket{\psi}\in\cF\cap K\circ\cM$ and $\cF\not\subseteq K\circ\cM$, the proof for the second case is analogous.
 
 As $\cF\not\subseteq K\circ\cM$, we can find $\ket{\varphi}\in\cF\setminus K\circ\cM$.
 Then $\ket{\varphi}$ has arity at least 2, as all unary signatures are in $K\circ\cM$.
 Without loss of generality, assume $\ket{\varphi}$ is genuinely entangled.
 
 If $\ket{\varphi}$ has arity 2, we can realise a symmetric ternary entangled signature by Lemma \ref{lem:W_symmetrise}.
 If that symmetric ternary signature is of GHZ type, we can apply Theorem \ref{thm:GHZ-state}.
 If it is of $W$ type and not in $K\circ\cM\cup KX\circ\cM$, the problem is \sP-hard by Theorem \ref{thm:W-state}.
 
 Finally, if the symmetric ternary signature is in $K\circ\cM$, then we can use it and $\ket{\varphi}$ to realise a symmetric binary signature that is not in 
 $K\circ\cM$ by an argument analogous to that in Lemma 21 of \cite{backens_new_2017}.
 While that Lemma assumes availability of four unary signatures, there is only one value for which the construction fails, so it suffices to have two unary signature available.
 With a symmetric binary signature that is not in $K\circ\cM$, hardness follows by Theorem \ref{thm:W-state}.
 
 Now assume $\ket{\varphi}$ is an $n$-ary entangled signature with $n>2$.
 Note that we can write the ternary signature $\ket{\psi}$ as $K\t{3}(a\ket{000}+b\ket{001}+c\ket{010}+d\ket{100})$, where $bcd\neq 0$.
 With a self loop on a vertex assigned signature $\ket{\psi}$, we can therefore construct unary signatures $2(b+c)(\ket{0}+i\ket{1}$, $2(b+d)(\ket{0}+i\ket{1})$ and $2(c+d)(\ket{0}+i\ket{1})$.
 As $bcd\neq 0$, at least one of those gadgets is non-zero.
 Thus we can realise $\ket{0}+i\ket{1}$.
 
 The remainder of the argument will be more straightforward after a holographic transformation. 
 We have:
 \[
  \Holp[c]{\cF} = \Holp{\cF\cup\{\ket{0},\ket{1}\}} \equiv_T \Holp{\cF\mid\{\ket{0},\ket{1},\ket{00}+\ket{11}\}}
 \]
 as subgraphs that contain only the signatures $\ket{0},\ket{1},\ket{00}+\ket{11}$ must contribute a factor 0 or 1 to the Holant, and it is straightforward to determine that factor.
 Therefore, by a holographic transformation:
 \begin{align*}
  \Holp[c]{\cF} &\equiv_T \Holp{ K^{-1}\circ\cF \mid K^T\circ\{\ket{0},\ket{1},\ket{00}+\ket{11}\} } \\
  &\equiv_T \Holp{ K^{-1}\circ\cF \mid \{\ket{+},\ket{-},\ket{01}+\ket{10}\} }.
 \end{align*}
 $As \ket{\varphi},\ket{\psi}\in\cF$, after the transformation, we have:
 \[
  \ket{\psi'}=(K^{-1})\t{3}\ket{\psi}=a\ket{000}+b\ket{001}+c\ket{010}+d\ket{100}
 \]
 and $\ket{\varphi'}=K^{-1}\circ\ket{\varphi}$ on the LHS.
 On the RHS, in addition to the already listed signatures, we have $K^T(\ket{0}+i\ket{1}) \doteq \ket{1}$ from the self-loop gadget described above.
 
 By plugging $\ket{+}$ or $\ket{-}$ into $\ket{\psi'}$, we can realise a LHS gadget with signature:
 \[
  (a\pm b)\ket{00} + c\ket{01} + d\ket{10}.
 \]
 There is a choice of sign making the coefficient of $\ket{00}$ non-zero.
 Then two such gadgets can be combined into a symmetric one, which (up to scalar factor) has signature $z\ket{00}+\ket{01}+\ket{10}$, where $z=\frac{2}{c}(a\pm b)\neq 0$.
 A chain of $n$ of these gadgets connected to $\ket{\pm}$ at one end gives a RHS gadget with signature:
 \[
  \ket{0}+ (nz\pm 1)\ket{1}.
 \]
 Whatever the value of $z$, we can realise polynomially many different unary signatures on the RHS.
 These can be used to realise a binary entangled LHS gadget by the following argument, which is similar to Lemma 20 in \cite{backens_new_2017}.
 Nevertheless, the differences are significant enough to give the entire proof here.
 
 From Theorem \ref{thm:popescu_rohrlich}, we know that there exist $\ket{\phi_k}\in\{\ket{0},\ket{1},\ket{\pm}\}$ for $k\in\{3,4,\ldots,n\}$ such that $\bra{\phi_3}_3\ldots\bra{\phi_n}_n \ket{\varphi'}$ is entangled.
 The entanglement condition for binary signatures is:
 \begin{multline}
  \bra{0}_1\bra{0}_2\bra{\phi_3}_3\ldots\bra{\phi_n}_n \ket{\varphi'} \bra{1}_1\bra{1}_2\bra{\phi_3}_3\ldots\bra{\phi_n}_n \ket{\varphi'} \\
  - \bra{0}_1\bra{1}_2\bra{\phi_3}_3\ldots\bra{\phi_n}_n \ket{\varphi'} \bra{1}_1\bra{0}_2\bra{\phi_3}_3\ldots\bra{\phi_n}_n \ket{\varphi'} \neq 0.
  \label{eq:entangled}
 \end{multline}
 Furthermore, as $\ket{\varphi'}\notin\cM$, there exists a bit string $y$ of Hamming weight at least 2 such that $\braket{y}{\varphi'}\neq 0$.
 Without loss of generality, assume $y_1=y_2=1$.
 Then:
 \begin{equation}\label{eq:not_cM}
  \bra{1}_1\bra{1}_2\bra{y_3}_3\ldots\bra{y_n}_n \ket{\varphi'} \neq 0.
 \end{equation}
 for some $y_3,\ldots,y_k\in\{0,1\}$.
 
 We show how to realise a binary gadget whose signature is not in $K\circ\cM$.
 Consider the inputs of $\ket{\varphi'}$ one by one, starting with the third.
 For the $k$-th input:
 \begin{itemize}
  \item If $\ket{\phi_k}=\ket{y_k}=\ket{1}$, leave it and move on to the $(k+1)$-th input.
  \item Otherwise, replace both $\ket{\phi_k}$ and $\ket{y_k}$ with $\ket{0}+\alpha_k\ket{1}$, where $\alpha_k\in\CC\setminus\{0\}$ is as yet undetermined.
  Then the LHS of \eqref{eq:not_cM} is a linear polynomial in $\alpha_k$, which does not vanish identically, and the LHS of \eqref{eq:entangled} is a quadratic polynomial in $\alpha_k$, which also does not vanish identically.
  Thus, there are at most three values of $\alpha_k$ for which one or both of the polynomials are zero.
  We can therefore realise a signature $\ket{0}+\alpha_k\ket{1}$ such that both polynomials are non-zero.
  Replace $\ket{\phi_k}$ and $\ket{y_k}$ by this new signature.
  Then move on to the $(k+1)$-th input.
 \end{itemize}
 Once all (but the first two) inputs have been considered in this way, we have a recipe for a gadget construction whose signature is binary, entangled, and not in $K\circ\cM$.
 Therefore we can proceed as in the case where $\ket{\varphi}$ is binary.
\end{proof}

\subsection{Main theorem}
\label{s:main_theorem}

We now have all the components required to prove the main dichotomy for \Hol$^c$.
The theorem generalises Theorem 5.1 of \cite{cai_dichotomy_2017}, which applies only to real-valued signatures.
Our proof follows the original one fairly closely.

\begin{thm}
 Let $\cF$ be a set of complex-valued signatures.
 Then $\Holp[c]{\cF}$ is \sP-hard unless:
 \begin{itemize}
  \item $\cF$ is a tractable family for \Hol$^*$,
  \item there exists $S\in\cS$ such that $\cF\subseteq S\circ\cA$, or
  \item $\cF\subseteq\cL$.
 \end{itemize}
 In all of the exceptional cases, $\Holp[c]{\cF}$ is tractable.
\end{thm}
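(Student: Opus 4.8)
The plan is to follow the proof strategy already sketched in the body of the paper, namely: try to realise an entangled ternary signature, and if that fails, realise a specific $4$-ary signature and reduce to $\csp_2^c$. Concretely, I would first dispose of the easy case: if no genuinely entangled signature of arity $\geq 3$ can be realised, then every signature in $\cF$ is a tensor product of unary and binary signatures, so $\cF\subseteq\avg{\cT}$ and the problem is tractable by Theorem \ref{thm:Holant-star} (this is the $\cF$ is a tractable family for $\Hol^*$ branch). Recall from Section \ref{s:properties} that if $\cF$ has multipartite entanglement we may assume without loss of generality that $\cF$ contains a genuinely entangled signature $\ket{\psi}$ of arity $n\geq 3$.

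Next I would split on the arity of the genuinely entangled signature we can obtain. If we can realise an entangled \emph{ternary} signature, Lemma \ref{lem:arity3_hardness} immediately gives the dichotomy: either $\Holp[*]{\cF}$ is tractable, or $\cF\subseteq S\circ\cA$ for some $S\in\cS$, or the problem is $\sP$-hard. So the remaining work is the case where $\cF$ has a genuinely entangled signature $\ket{\psi}$ of arity $n\geq 4$ but (for whatever reason) we cannot pin it down to an entangled ternary signature using only $\ket{0},\ket{1}$. Here I would mirror the real-valued argument of \cite{cai_dichotomy_2017}: by restricting inputs of $\ket{\psi}$ to $\ket{0}$ and $\ket{1}$ — which are freely available in $\Hol^c$ — one shows that either some restriction is genuinely entangled of arity $3$ (handled above), or every such restriction of arity $3$ factors, which forces $\ket{\psi}$ (after possibly relabelling inputs and taking a further restriction) to have the shape $a\ket{0000}+b\ket{0011}+c\ket{1100}+d\ket{1111}$ with $\left(\begin{smallmatrix}a&b\\c&d\end{smallmatrix}\right)$ of full rank, or the shape of a $4$-ary generalised equality $a\ket{x_1x_2x_3x_4}+b\ket{\bar x_1\bar x_2\bar x_3\bar x_4}$. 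In the first subcase, Lemma \ref{lem:interpolate_equality4} lets us realise or interpolate $=_4$, and in the second Lemma \ref{lem:generalised_equality4} applies directly; either way we get $\Holp{\cF}\equiv_T\csp_2^c(\cF)$ (using that $\ket{0},\ket{1}$ are available, so we land in the $\csp_2^c$ variant rather than plain $\csp_2$). Invoking the $\csp_2^c$ dichotomy (Theorem 4.1 of \cite{cai_dichotomy_2017}, recapped in the excerpt) then yields: the problem is $\sP$-hard unless $\cF\subseteq\avg{\cE}$, $\cF\subseteq\cA$, $\cF\subseteq T\circ\cA$, or $\cF\subseteq\cL$. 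The first three of these are absorbed into the stated tractable cases ($\avg{\cE}\subseteq\avg{I\circ\cE}$ so it is a tractable $\Hol^*$ family; $\cA\subseteq S\circ\cA$ with $S=I\in\cS$; and $T\circ\cA\subseteq\cL$ since $\cL$ is defined precisely so that $T$-type rotations of affine signatures lie in it), leaving exactly the three cases in the theorem statement.

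Finally I would collect the tractability claims: $\cF$ a tractable $\Hol^*$ family gives tractability of $\Holp[*]{\cF}$ and hence of $\Holp[c]{\cF}\leq_T\Holp[*]{\cF}$; $\cF\subseteq S\circ\cA$ for $S\in\cS$ gives tractability by Theorem \ref{thm:symmetric_Holant-c} (whose tractable-side algorithm does not use symmetry); and $\cF\subseteq\cL$ gives tractability via the $\csp_2^c$ algorithm together with the reduction above, or more directly since the $\cL$ case of the $\csp_2^c$ dichotomy is algorithmic. I expect the main obstacle to be the arity-reduction step in the $n\geq 4$ case: verifying carefully that, when no entangled ternary restriction exists, pinning inputs to $\ket{0},\ket{1}$ alone (as opposed to also $\ket{+},\ket{-}$, which are \emph{not} free in $\Hol^c$) still forces one of the two special $4$-ary forms, and that the relevant $2\times2$ coefficient matrix has full rank. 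This is exactly the point where the original proof's reliance on Theorem \ref{thm:popescu_rohrlich}-style restrictions has to be re-examined, since that theorem uses $\ket{\pm}$; the paper's remark that Lemma \ref{lem:interpolate_equality4} suffices in place of the real-valued interpolation is the key technical input that makes this go through over $\CC$.
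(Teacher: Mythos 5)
Your plan matches the architecture of the paper's proof: dispose of $\cF\subseteq\avg{\cT}$ via Theorem \ref{thm:Holant-star}, hand any entangled ternary signature to Lemma \ref{lem:arity3_hardness}, and otherwise drive an arity-$\geq 4$ genuinely entangled signature down to either a ternary entangled signature or one of the two special 4-ary forms, finishing with Lemma \ref{lem:interpolate_equality4} or Lemma \ref{lem:generalised_equality4} and the \csp$_2^c$ dichotomy. The absorption of the \csp$_2^c$ tractable cases into the theorem's three bullets is also essentially right (though note $T\circ\cA$ is most cleanly absorbed as $S\circ\cA$ with $S=T\in\cS$, since $(T^T)\t{2}(\ket{00}+\ket{11})=\ket{00}+i\ket{11}\in\cA$ and $T^T\ket{0},T^T\ket{1}\in\cA$; your claim that $T\circ\cA\subseteq\cL$ is not immediate from the definition of $\cL$).

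The genuine gap is the step you yourself flag as ``the main obstacle'': you assert, but do not prove, that when no restriction of $\ket{\psi}$ by $\ket{0},\ket{1}$ yields an entangled ternary signature, $\ket{\psi}$ must reduce to one of the two 4-ary forms. This trichotomy is not correct as stated and, more importantly, establishing the correct version is the bulk of the proof. The paper's argument is a nested case analysis on Hamming distances: first $D_0$, the minimum distance between distinct support strings of $\ket{\psi}$ (cases $D_0\geq 4$ even, $D_0\geq 3$ odd, $D_0=2$, $D_0=1$), and then, in the low-distance cases, further parameters $D_1$, $D_2$, $D_3$ measuring how far one must move in the remaining inputs before the pinned residual signature changes. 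Several branches do not fit your dichotomy: for $D_0\geq 4$ even one obtains a generalised equality of arity $D_0$ that must be reduced by self-loops before Lemma \ref{lem:generalised_equality4} applies; for $D_0=1$, $D_3=2$ the ternary entangled signature is only obtained by connecting a previously realised unary gadget $a\ket{0}+b\ket{1}$ to a 4-ary signature; and checking the full-rank condition for Lemma \ref{lem:interpolate_equality4} in the $D_1=2$ subcase requires the linear-independence bookkeeping from the distance analysis. Your worry about Theorem \ref{thm:popescu_rohrlich} needing $\ket{\pm}$ is misplaced for this part of the argument (that theorem is only used inside Lemma \ref{lem:case_KM}); the arity reduction in the main theorem uses pinning by $\ket{0},\ket{1}$ alone, and the work lies in the exhaustive distance case analysis, which your proposal leaves entirely unexecuted.
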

\begin{proof}
 If $\cF$ is one of the tractable families for \Hol$^*$ or $\cF\subseteq S\circ\cA$ for some $S\in\cS$, or $\cF\subseteq\cL$, tractability of $\Holp[c]{\cF}$ follows using the same algorithms as employed in the dichotomy proofs for \Hol$^*$, \csp{} (possibly after a holographic transformation), or \csp$_2^c$.
 So assume otherwise.
 In particular, this implies that $\cF\not\subseteq\avg{\cT}$, i.e.\ $\cF$ has multipartite entanglement.
 
 Without loss of generality, we may focus on genuinely entangled signatures. 
 So assume that there is some genuinely entangled signature $\ket{\psi}\in\cF$ of arity $n\geq 3$.
 If the signature has arity 3, we are done by Lemma \ref{lem:arity3_hardness}.
 Hence assume $n\geq 4$.
 As $\ket{\psi}$ is genuinely entangled, there exist two distinct $n$-bit strings $x$ and $y$ such that $\braket{x}{\psi}\braket{y}{\psi}\neq 0$.
 
 As in \cite{cai_dichotomy_2017}, let:
 \[
  D_0 = \min \left\{ d(x,y) \mid x\neq y, \braket{x}{\psi}\neq 0, \braket{y}{\psi}\neq 0 \right\},
 \]
 where $d(\cdot,\cdot)$ is the Hamming distance, and distinguish cases according to the values of $D_0$.
 
 \textbf{Case $D_0\geq 4$ and $D_0$ is even}: Pick a pair of bit strings $x,y$ with minimal Hamming distance. 
 Pin all inputs where the two bit string agree (without loss of generality, we always assume bit strings agree on the last $n-D_0$ bits).
 This realises a signature of the form  $a(\bigotimes_{k=1}^{D_0} \ket{x_k}) +b(\bigotimes_{k=1}^{D_0} \ket{\bar{x}_k})$, where $ab\neq 0$ and $x_k\in\{0,1\}$ for $k\in[D_0]$.
 Via self-loops, the arity of this signature can be reduced in steps of 2 to realise a 4-ary generalised equality signature, i.e.\ a signature of the form $a\ket{x_1 x_2 x_3 x_4}+b\ket{\bar{x}_1 \bar{x}_2 \bar{x}_3 \bar{x}_4}$.
 Then \sP-hardness follows by Lemma \ref{lem:generalised_equality4}.
 
 \textbf{Case $D_0\geq 3$ and $D_0$ is odd}: Pin analogously to the previous case to realise a signature $a\ket{x_1x_2x_3}+b\ket{\bar{x}_1\bar{x}_2\bar{x}_3}$ with $ab\neq 0$ and $x_1,x_2,x_3\in\{0,1\}$.
 Then \sP-hardness follows by Lemma \ref{lem:arity3_hardness}.
 
 \textbf{Case $D_0=2$}: We can realise a signature $a\ket{00}+b\ket{11}$ by pinning.
 Following the proof in \cite{cai_dichotomy_2017}, let $A_1$ be the set of bit strings $x = x_3x_4\ldots x_n$ for which $\ket{\phi_x}:=(\bra{x_3}_3\ldots\bra{x_n}_n)\ket{\psi}$ is a non-zero scaling of $a\ket{00}+b\ket{11}$.
 Let $B_1$ be the set of bit strings $y=y_3\ldots y_n$ for which $\ket{\phi_y}$ is not a scaling of $a\ket{00}+b\ket{11}$.
 This excludes bit strings for which $\ket{\phi_y}=0$.
 Both $A_1$ and $B_1$ must be non-empty as $\ket{\psi}$ is entangled.
 Furthermore, $A_1\cap B_1=\emptyset$.
 Thus we can define:
 \[
  D_1 = \min\{ d(x,y) \mid x \in A_1, y\in B_1 \}.
 \]
 Note that the assumption $D_0=2$ implies that either $\braket{01}{\phi_y}=\braket{10}{\phi_y}=0$ or $\braket{00}{\phi_y}=\braket{11}{\phi_y}=0$ for all $y\in B_1$.
 We now distinguish cases according to the values of $D_1$.
 \begin{itemize}
  \item If $D_1\geq 3$, pick a pair $x,y$ with minimal Hamming distance and pin wherever they are equal, as in the cases where $D_0\geq 3$.
   This realises a signature:
   \[
    (a\ket{00}+b\ket{11}) \left(\bigotimes_{k=1}^{D_1} \ket{x_k}\right) + (c\ket{00}+d\ket{11}) \left(\bigotimes_{k=1}^{D_1} \ket{\bar{x}_k}\right)
   \]
   or
   \[
    (a\ket{00}+b\ket{11}) \left(\bigotimes_{k=1}^{D_1} \ket{x_k}\right) + (c\ket{01}+d\ket{10}) \left(\bigotimes_{k=1}^{D_1} \ket{\bar{x}_k}\right),
   \]
   where $x_k\in\{0,1\}$ for $k\in[D_1]$, $c,d\in\CC$ are not both zero, and, in the first case, $ad-bc\neq 0$.
   \begin{itemize}
    \item In the first case, suppose $c\neq 0$.
     Then we can pin the first two inputs to 00 to get a signature $a(\bigotimes_{k=1}^{D_1} \ket{x_k})+c(\bigotimes_{k=1}^{D_1} \ket{\bar{x}_k})$, at which point we proceed as in the cases $D_0\geq 4 $ or $D_0\geq 3$.
     If $c=0$ then $d\neq 0$ and we can pin to 11 instead for an analogous argument.
    \item In the second case, suppose $c\neq 0$.
     Then we can pin the first input to 0 to realise:
     \[
      a\ket{0}\left(\bigotimes_{k=1}^{D_1} \ket{x_k}\right)+c\ket{1}\left(\bigotimes_{k=1}^{D_1} \ket{\bar{x}_k}\right)
     \]
     at which point we again proceed as in the cases $D_0\geq 4 $ or $D_0\geq 3$.
     If $c=0$ then $d\neq 0$ and we can pin to 1 instead for an analogous argument.
   \end{itemize}
  \item If $D_1=2$ and the signature after projecting is $a\ket{0000}+b\ket{1100}+c\ket{0111}+d\ket{1011}$, we can realise a ternary signature as in the second subcase above, and then apply Lemma \ref{lem:arity3_hardness}.
   If the signature after pinning is $a\ket{0000}+b\ket{1100}+c\ket{0011}+d\ket{1111}$ with $ab\neq 0$ and $ad-bc\neq 0$, apply Lemma \ref{lem:interpolate_equality4}.
   Here, the original proof in \cite{cai_dichotomy_2017} used a different technique requiring real values.
  \item If $D_1=1$, we can realise an entangled ternary signature and then apply Lemma \ref{lem:arity3_hardness}.
 \end{itemize}
   
 \textbf{Case $D_0=1$}: We can realise $a\ket{0}+b\ket{1}$ for some $ab\neq 0$.
 Let $A_2$ be the set of bit strings $x = x_2 x_3\ldots x_n$ for which $\ket{\varphi_x}:=(\bra{x_2}_2\ldots\bra{x_n}_n)\ket{\psi}$ is a non-zero scaling of $a\ket{0}+b\ket{1}$.
 Let $B_2$ be the set of bit strings $y$ for which $\ket{\varphi_y}$ is not a scaling of $a\ket{0}+b\ket{1}$.
 Then let:
 \[
  D_2 = \min\{ d(x,y) \mid x \in A_2, y\in B_2 \}.
 \]
 \begin{itemize}
  \item If $D_2\geq 3$, we can pin to realise a signature:
    \[
     (a\ket{0}+b\ket{1}) \left(\bigotimes_{k=1}^{D_2} \ket{x_k}\right) + (c\ket{0}+d\ket{1}) \left(\bigotimes_{k=1}^{D_2} \ket{\bar{x}_k}\right)
    \]
    where $c,d\in\CC$, $ad-bc\neq 0$, and $x_k\in\{0,1\}$ for $k\in[D_2]$.
    If $c\neq 0$, pin the first input to 0 and then proceed as before.
    If $c=0$ then $d\neq 0$, so we can pin to 1 instead.
  \item If $D_2=2$, we get an entangled ternary signature so we are done by Lemma \ref{lem:arity3_hardness}.
   This is another change compared to the proof in \cite{cai_dichotomy_2017}, where hardness was only shown for a real-valued signature of the given form.
  \item If $D_2=1$, we can realise an entangled binary signature $a\ket{00}+b\ket{01}+c\ket{10}+d\ket{11}$ with $ad-bc\neq 0$.
  Unlike in \cite{cai_dichotomy_2017}, we do not attempt to use this binary signature for interpolation.
  Instead we immediately proceed to defining $A_3$, $B_3$, and $D_3$ analogous to before.
   \begin{itemize}
    \item If $D_3\geq 3$, we can realise a signature:
     \begin{multline*}
      (a\ket{00}+b\ket{01}+c\ket{10}+d\ket{11}) \left(\bigotimes_{k=1}^{D_3} \ket{x_k}\right) \\
      + (a'\ket{00}+b'\ket{01}+c'\ket{10}+d'\ket{11}) \left(\bigotimes_{k=1}^{D_3} \ket{\bar{x}_k}\right)
     \end{multline*}
     where $(a',b',c',d')$ and $(a,b,c,d)$ are linearly independent.
     If $aa'=bb'=cc'=dd'=0$, then it must be the case that $c=0$ and $c'd\neq 0$, or $d=0$ and $cd'\neq 0$.
     This is because $a',b',c',d'$ cannot all be zero simultaneously, and $a,b$ are non-zero by assumption.
     In the former case, pin the first input to 1 to get:
     \[
      d\ket{1}\left(\bigotimes_{k=1}^{D_3} \ket{x_k}\right)+c'\ket{0}\left(\bigotimes_{k=1}^{D_3} \ket{\bar{x}_k}\right),
     \]
     then proceed as before.
     In the latter case, the same approach works, although the resulting signature is different.
       
     Otherwise, there exists a pair of primed and unprimed coefficients of the same label that are both non-zero.
     If these are $a$ and $a'$, pin the first two inputs to 00 to get a generalised equality.
     If the non-zero pair are $b$ and $b'$, pin to 01, and so on.
    \item If $D_3=2$, as shown in the original theorem, we can realise the following signature:
     \[
      (a\ket{00}+b\ket{01}+c\ket{10}+d\ket{11})\ket{x_1x_2} + (a'\ket{00}+b'\ket{01}+c'\ket{10}+d'\ket{11})\ket{\bar{x}_1\bar{x}_2},
     \]
     where $ab\neq 0$, $ad-bc\neq 0$, $x_1,x_2\in\{0,1\}$, and $(a',b',c',d')$ is linearly independent from $(a,b,c,d)$.
     We can realise a genuinely entangled ternary signature by connecting $a\ket{0}+b\ket{1}$ to the last input, at which point we can apply Lemma \ref{lem:arity3_hardness}.
     This is a change compared to the original proof in \cite{cai_dichotomy_2017}, where the hardness lemma only applied to real values and the construction did not employ the signature $a\ket{0}+b\ket{1}$.
    \item If $D_3=1$, we get an entangled ternary signature so we are done by Lemma \ref{lem:arity3_hardness}.
     This is another change from \cite{cai_dichotomy_2017}, where multiple cases were distinguished and the hardness lemmas only applied to real-valued signatures.
   \end{itemize}
 \end{itemize}

 We have covered all cases, hence the proof is complete.
\end{proof}

\section{Conclusions}
\label{s:conclusions}

Building on the existing dichotomies for real-valued \Hol$^c$ and for complex-valued \Hol$^+$, we have derived a dichotomy for complex-valued \Hol$^c$.
The tractable cases are the complex generalisations of the tractable cases of the real-valued \Hol$^c$ dichotomy.
The question of a dichotomy for complex-valued, not necessarily symmetric \Hol$^c$ problems had been open since the definition of \Hol$^c$ in 2009.

Several steps in the dichotomy proof use knowledge from quantum information theory, particularly about entanglement.
We expect this approach of bringing together Holant problems and quantum information theory to yield further insights into both areas of research in the future.
The ultimate goals include a dichotomy for general Holant problems on the one hand, building up on existing results for symmetric functions \cite{cai_complete_2013} and non-negative real-valued, not necessarily symmetric functions \cite{lin_complexity_2016}.
On the other hand, we are hoping to gain more understanding of the complexity of classically simulating quantum circuits.

In fact, \Hol$^c$ is a natural setting for the latter as many circuit-based quantum computation schemes assume the availability of $\ket{0}$ and $\ket{1}$, called the computational basis states.
On the other hand, quantum computation usually distinguishes between three phases: preparation of input states (often in the computational basis), unitary transformations, and measurement (again, often in the computational basis).
This distinction is not natural or easy to impose in the Holant framework.
The tractable sets related to $\avg{\cE}$ and $\avg{K\circ\cM}$ do not contain any interesting unitary operations, and $\avg{\cT}$ does not contain any multipartite entanglement.
Therefore the only non-trivial tractable classes of quantum computations arising from the \Hol$^c$ dichotomy are those related to affine signatures.
That family of signatures is known in quantum theory as stabilizer quantum mechanics, and it has been known to be efficiently simulable on classical computers for nearly two decades \cite{gottesman_heisenberg_1998}.

Nevertheless, if it is possible to translate the distinction between state preparation, unitary operations, and measurements into the Holant framework -- e.g.\ by considering directed graphs -- it may still be possible to learn more about the complexity of classically simulating other interesting classes of quantum computations.
It may also be useful to look at \Hol$^c$ on planar graphs, or to consider other restricted classes of graphs.

\section*{Acknowledgements}

I would like to thank Pinyan Lu for pointing out a flaw in the original statement of the main theorem.
Many thanks also to Ashley Montanaro and William Whistler for helpful comments on earlier versions of this paper.
I acknowledge funding from EPSRC via grant EP/L021005/1.

\bibliographystyle{eptcs}
\bibliography{refs}

\end{document}